\documentclass[10pt,journal]{IEEEtran}
\usepackage{amsmath}
\usepackage{amsthm}
\usepackage{amsfonts}
\usepackage{epsfig}
\usepackage{macros}
\usepackage{caption}
\usepackage{subcaption}
\usepackage{MnSymbol}
\usepackage{graphicx}
\usepackage{xcolor}
\usepackage{color}
\usepackage{tikz}
\usepackage{cite}
\usepackage{mathtools}
\usepackage{epstopdf}
\DeclareGraphicsExtensions{.pdf}
\newtheorem{definition}{Definition}
\newtheorem{theorem}{Theorem}
\newtheorem{lemma}{Lemma}
\usepackage{tikz}
\title{\LARGE \bf
{Almost-global tracking of the unactuated joint in a pendubot
}}
\author{Aradhana Nayak$^{1}$  and Ravi N. Banavar$^{2}$
\thanks{$^{1}$ Aradhana Nayak and $^{2}$Ravi N. Banavar are with the Systems and Control Enginerring,
Indian Institute of Technology Bombay, Mumbai, Maharashtra 400076, India
        {\tt\small aradhana@sc.iitb.ac.in, banavar@iitb.ac.in}}%
}

\begin{document}

\maketitle
\thispagestyle{empty}
\pagestyle{empty}

%%%%%%%%%%%%%%%%%%%%%%%%%%%%%%%%%%%%%%%%%%%%%%%%%%%%%%%%%%%%%%%%%%%%%%%%%%%%%%%%

\begin{abstract}
Tracking the unactuated configuration variable
in an underactuated system, in a global sense, has not received much
attention. Here we present a scheme to do so for a  pendubot - a two link robot actuated
only at the first link. We propose a control law that asymptotically tracks any smooth reference trajectory
of the unactuated second joint , from
almost-any initial condition, termed as almost-global asymptotic tracking (AGAT). Further, the actuated joint's
angular velocity remains bounded.
We go on to generalize the proposed scheme  to an n-link system with as many (or more) degrees of actuation
than unactuation, and show that the result holds.
 \end{abstract}

 \section{Introduction}
 Stabilization of underactuated systems about an equilibrium has been exhaustively
 studied (\cite{ortega2002stabilization}, \cite{shiriaev2005constructive} etc.). The underactuated, two-link robot is a frequently encountered example, and perhaps,
one of the first to be studied, in this class of systems. The pendubot (see Figure \ref{fig1}) is a two link robot in which actuation is applied to the first  joint and the second joint is free to rotate. Interchanging the actuated and unactuated joints results in a mechanism termed the 'acrobot'.
A result by Hauser and Murray (\cite{hausermurray}) for the acrobot prompted other investigation into this system.  In \cite{hausermurray}, a {\it local} tracking problem is considered about the inverted equilibrium of the acrobot or the 'swing up' state. The authors use an approximation to the nonlinear model and obtain the control law by linearizing this approximate model around the equilibrium. Later, in \cite{spong1}, the control problem of swing up of the two link robot is considered and a control strategy based on partial feedback linearization is proposed to stabilize the upright equilibrium. Energy based control techniques are applied for the swing up control problem in \cite{spong2}, \cite{astrom}, \cite{arunm}, \cite{arunm2}, \cite{shiriaev2000}. In all these papers, the objective is restricted to stabilization or local tracking about the unstable equilibrium.

A global treatment of the stabilization objective is found as an application in \cite{acosta} and \cite{ortega2002}, where, the problem of almost-global asymptotic stabilization (AGAS) of the
upright equilibrium of the acrobot is addressed. The authors use Interconnection and Damping Assignment Passivity-Based Control (IDA-PBC) to design a state feedback law for AGAS of the equilibrium state. In \cite{olfati}, the underactuated system is converted into a suitable cascade normal form and thereafter existing design methods such as backstepping and forwarding are applied for AGAS of the upright position. However, to the best of our knowledge, the problem of almost-global asymptotic tracking (AGAT) of a smooth trajectory for the
unactuated joint, neither for the
acrobot nor for the pendubot, has been investigated in the literature.

Recent developments in geometric nonlinear control theory have provided us with tools to model and control the dynamical behavior of simple mechanical systems (SMSs). A simple mechanical system (\cite{bulo} and defined in II) comprises of a class of mechanical systems whose dynamics can be completely described by (a) the configuration manifold, (b) the kinetic energy which defines a metric on the manifold, (c)  the set of available control vectors, and (d)
the external forces acting on the system. The underactuated two link robot is a simple mechanical system on $S^1 \times S^1$. AGAS for a fully actuated SMS for which the configuration space is a Lie group has been studied in \cite{kodi}, \cite{dayawansa}, \cite{pidmtp}, \cite{anrnb2}. However, the dynamics of the second link, in isolation, is not an SMS on $S^1$.

The control objective in this article is to asymptotically track a reference trajectory of link 2, which is not actuated. The control torque applied on the link 1 is induced on the link 2 through the coupling mechanism. Interconnected, underactuated mechanical systems have been studied in the context of hoop robots in \cite{madhumtp} and for a rigid body with $3$ internal rotors in \cite{anrnb1}. In these papers the coupled body which is desired to be controlled is made to look like an SMS by employing feedback control. This method is called feedback regularization \cite{madhumtp}.
\\
\textbf{Contribution and organization}\\
In this paper, we express the dynamics of the pendubot as an SMS on $S^1 \times S^1$. Next, we express the error dynamics for the specified tracking variable, and, apply a feedback control law which makes the error dynamics an SMS on $S^1$. The geometric setting employed to describe the dynamics and feedback control is coordinate free, and therefore, global in representation. Finally, we apply the existing AGAT control for an SMS on $S^1$ in \cite{anrnb2} which leads to the tracking objective being achieved. The contribution of this paper is essentially in treating the underactuated two link robot in a purely geometric setting and solve the problem of almost-global tracking of any bounded, smooth reference trajectory on $S^1$.

The paper flows as follows. Section II deals with preliminaries on frequently used notions in theory of Lie groups and the description of an SMS on a Lie group. In Section III, the equations of motion are derived for the pendubot by applying variational principles. The fourth section introduces the tracking problem for the pendubot and a control law is derived for AGAT of a reference trajectory. In Section V we generalize the tracking control for an $n$-link robot with $l$ unactuated joints. The next section is simulation verification of the proposed tracking control for two pendubots.

\section{SMS on a Lie group}
\subsection{Preliminaries on Lie groups}
Let $G$ be a Lie group and let $\mathfrak{g}$ denote its Lie algebra. Let $\phi: G \times G \to G$ be the left group action in the first argument defined as $\phi(g,h) \coloneq  L_{g} (h)= gh $ for all $g$, $h \in G$.  The Lie bracket on $\mathfrak{g}$ is denoted by $[,]$. For matrix Lie groups, the Lie bracket is the commutator operator. $Ad_g : \mathfrak{g} \to \mathfrak{g}$ is defined as $Ad_g(\xi) = T_eL_gR_{g^{-1}}\xi$. It's dual $Ad_g^*: \mathfrak{g}^* \to \mathfrak{g}^*$ is defined as $\langle Ad_g^* \alpha , \eta \rangle \coloneq \langle \alpha, Ad_g \eta \rangle$ for $\alpha \in \mathfrak{g}^*$. The tangent map to $Ad_g$ is called \textit{adjoint map}, and denoted as $ad_\xi : \mathfrak{g} \to \mathfrak{g}$ for $\xi \in \mathfrak{g}$. It is defined as $ad_\xi \eta \coloneq [\xi, \eta]=\frac{\mathrm{d}}{\mathrm{d}t}|_{t=0} Ad_{exp(t\xi)} \eta$ for $\eta \in \mathfrak{g}$. We define the dual  $ad^*_\xi : \mathfrak{g}^* \to \mathfrak{g}^*$ to the adjoint map as $\langle ad^*_\xi \alpha, \eta \rangle = \langle \alpha, ad_{\xi} \eta \rangle $.
Let $\mathbb{I} :\mathfrak{g} \to \mathfrak{g}^*$ be an isomorphism from the Lie algebra to its dual. The inverse is denoted by $\mathbb{I}^\sharp: \mathfrak{g}^* \to \mathfrak{g}$. $\mathbb{I}$ induces a left invariant metric on $G$ (see Section 5.3 in \cite{bulo}), which we denote by $\mathbb{G}_{\mathbb{I}}$ and define by the following $\mathbb{G}_{\mathbb{I}}(g).(X_g,Y_g) = \langle \mathbb{I}(T_gL_{g^{-1}} (X_g)),T_gL_{g^{-1}} (Y_g)\rangle$ for all $g \in G$ and $X_g$, $Y_g \in T_g G$. $\stackrel{\mathfrak{g}}{\nabla}$ is the bilinear map defined as
\begin{equation}\label{eq:60}
 \stackrel{\mathfrak{g}}{\nabla}_\xi \nu = \frac{1}{2} [\xi, \nu] - \frac{1}{2} \mathbb{I}^ \sharp ( ad^*_{\xi} \mathbb{I}\nu +  ad^*_{\nu} \mathbb{I}\xi  )
\end{equation}
for $\nu$, $\xi \in \mathfrak{g}$.

\subsection{SMS on a Lie group}
 A simple mechanical system (or SMS) on a Lie group $G$ with a metric $\mathbb{G}_{\mathbb{I}}$ is denoted by the 7-tuple $(G,\mathbb{G}_{\mathbb{I}},V, F, \mathcal{F})$, where $V: G \to \mathbb{R}$ is a potential function on $G$, $F \in \mathfrak{g}^*$ is an external uncontrolled force, $\mathcal{F}= \{f^1 ,\dotsc, f^m\}$ is a collection of covectors in $\mathfrak{g}^*$. The control forces are covector fields given by $F^i(g) = T^*_g L_{g^{-1}} f^i$, $i = 1\dotsc m$. The SMS is fully actuated if $T_g^*G = span\{F^1(g), \dotsc , F^m(g)\}$, $\forall g \in G$.  The equations of motion for the SMS $(G, {\mathbb{I}},F, \mathcal{F})$ are given by
\begin{align}\label{dynliegrp}
\xi &= T_g L_{g^{-1}} \dot{g},\\ \nonumber
\dot{\xi} - \mathbb{I}^\sharp ad^*_\xi \mathbb{I} \xi &= \mathbb{I}^{\sharp} F
\end{align}
where $g(t)$ describes the system trajectory.

\section{Equations of motion}
\begin{figure}[h!]
  \centering
  \includegraphics[scale=0.8]{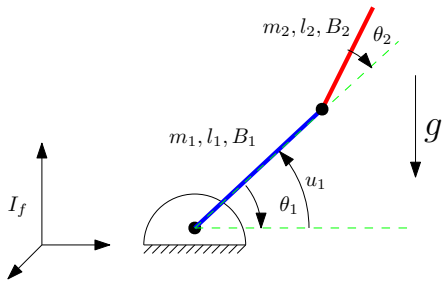}
  \caption{Schematic model of pendubot}\label{fig1}
\end{figure}

Since much of the theory that is employed to synthesize the tracking control law rests on geometric ideas,
our approach to the pendubot will proceed on such geometric lines, and exploit its Lie group structure. At first sight,
this might seem an overkill for the two-link problem, but the logic of the control synthesis is more evident in the
geometric setting. Further, the generalization to the $n$-link manipulator is more easily done employing the
geometric framework.

The pendubot (in Figure \ref{fig1}) is a simple mechanical system which evolves on the manifold $SO(2) \times SO(2)$. To enable a matrix Lie group rooted approach to the problem, all the configuration variables in this article are expressed in $SO(2)$, since $S^1$ is diffeomorphic to $SO(2)$ by the map $f$ defined as:
\[
f: S^1 \to SO(2) \;\;\;\;
f(e^{i \theta}) \coloneq \begin{pmatrix}
\sin \theta & - \cos \theta \\ \cos \theta  & \sin \theta
\end{pmatrix}
\]
where, $e^{i \theta} = \cos \theta + i \sin \theta$ and $e^{i \theta} \in S^1$,
\\
The kinetic energy of the two link robot is considered as the Riemannian metric. The moment of inertia about respective hinge points for link 1 (in blue) and link 2 (in red) are $\mathbb{I}_1$ and $\mathbb{I}_2$ respectively. $B_1$, $B_2$ are body fixed frames on link 1 and link 2 respectively and $I_f$ is the inertial frame. $R_1, R_2 \in SO(2)$ are rotation matrices from $B_1$ to $I_f$ and from $B_2$ to $B_1$ respectively. In Figure \ref{fig1}, $R_i = e^{i \theta_i}$. The center of mass of each rod is assumed to be at the midpoint. $\omega_1$ and $\omega_2$ are body velocities defined as $\hat{\omega}_i= R_i^T \dot{R}_i$ for $i=1,2$. Kinetic energy of the two link robot is given by
\begin{equation}\label{ke}
  K(R_1,R_2,\omega_1,\omega_2) = \int_{B_1} ||\dot{x}_{B_1}||^2 \rho_1 \mathrm{d}V_1+\int_{B_1} ||\dot{x}_{B_2}||^2 \rho_2 \mathrm{d}V_2
\end{equation}
where , $x_{B_1} = R_1 X_{B_1}$, $x_{B_2} = R_1 R_2 X_{B_2} + R_1L_1$, $X_{B_i}$ is the position of a unit mass in the $i$th link in $I_f$ frame and, $L_1= \begin{pmatrix}
  0 & l_1
\end{pmatrix}^T$, $\rho_i$ is the density of $i$th link. Expanding (see Appendix \ref{app1} for details) yields
\begin{align} \label{ke1}
K(\omega_1, \omega_2, R_2) = &\omega_1^2(\mathbb{I}_1 + \mathbb{I}_2 + m_2 l_1^2+m_2L_1^T R_2 L_2 ) \\ \nonumber
&+ \omega_1 \omega_2 ( 2 \mathbb{I}_2+ m_2L_1^T R_2 L_2 )+ \mathbb{I}_2 \omega_2^2
\end{align}
The potential energy is given by
\begin{align}\label{pe}
V(R_1, R_2)= (\frac{m_1}{2}+m_2)g l_1 e_1^T R_1 e_1 + \frac{m_2}{2}g l_2 e_1^T R_2 R_1 e_1
\end{align}
where $e_1 = \begin{pmatrix}
1 & 0
\end{pmatrix}^T$. The equations of motion are (see Appendix \ref{app2})\\
\begin{subequations}
\fbox{
 \addtolength{\linewidth}{-2\fboxsep}%
 \addtolength{\linewidth}{-2\fboxrule}%
 \begin{minipage}{\linewidth}
\begin{equation}\label{dyneq13}
\dot{\omega}_1 = \frac{1}{2K_1}(u_1 - \Gamma_1 - K_2 \dot{\omega}_2 - \alpha(2\omega_1 + \omega_2))
\end{equation}
\begin{align}\label{dyneq23}
(2K_3 - \frac{K_2^2}{2K_1})\dot{\omega}_2 = &-\frac{K_2}{2K_1}(u_1 -\Gamma_1- \alpha(2\omega_1 + \omega_2)) \\ \nonumber
&- \alpha \omega_1 + \beta(\omega_1^2 + \omega_1 \omega_2) - \Gamma_2.
\end{align}
\end{minipage}
}
\end{subequations}
where,
\begin{align*}
&K_1 = \mathbb{I}_1 + \mathbb{I}_2 + m_2 l_1^2+m_2L_1^T R_2 L_2 ,\\
&K_2 =2 \mathbb{I}_2+ m_2L_1^T R_2 L_2 ,\quad  K_3 = \mathbb{I}_2,\\
&\alpha=m_2 \langle L_1 L_2^T, R_2 \hat{\omega}_2 \rangle, \quad \beta = m_2{\{skew(R_2 L_1 L_2^T)\}}\breve{},\\
&\Gamma_1 \coloneq (0.5{m_1}+m_2)g l_1 {\{skew(R_1 e_1 e_1^T)\}}\breve{}\\
&+ 0.5{m_2} g l_2 {\{skew(R_1 R_2^T e_1 e_1^T)\}}\breve{} ,\\
&\Gamma_2 \coloneq 0.5{m_2}g l_2 {\{skew(R_2 e_1 e_1^T R_1^T)\}}\breve{}.
\end{align*}

\section{Tracking control}
In this section, we follow the procedure in Section V of \cite{anrnb2} to define a tracking error and error dynamics for the unactuated joint. Let ${R_{2_{ref}}} :\mathbb{R}^+ \to SO(2)$ be a smooth reference trajectory for joint 2 which has bounded velocity.
The objective is to choose $u_1(t)$ in \eqref{dyneq23} so that $R_2(t)$ tracks ${R_{2_{ref}}}(t)$
from almost all initial conditions with asymptotic convergence (this is almost-global tracking which we define later, in Definition \ref{AGATdef}). The almost-global tracking of $R_{2_{ref}}$ is achieved in two steps.
\begin{itemize}
\item In the first step, we define the error dynamics for the unactuated link which describes the evolution of the error trajectory. We ensure that the control input appears in this
equation by substituting for the actuated variable in terms of the control.
\item  In the second step, in order to bring in an SMS structure to the unactuated dynamics,
we define a new control which incorporates the additional terms resulting due to step 1.
Once the system has the SMS structure, existing tracking laws are easily implemented. We then
proceed to synthesize AGAT control.
\end{itemize}
We shall henceforth refer to this two-step procedure as the separation
principle.

\subsection{AGAS of error dynamics}
Let us denote the angular velocity of the reference trajectory as ${\hat{\omega}}_r \coloneq {R_{2_{ref}}}^T \dot{R}_{2_{ref}}$. We then define a configuration error trajectory$E(t) \coloneq {R_{2_{ref}}}(t) R_2^T(t)$ on $SO(2)$. The velocity of this error trajectory is given as
\begin{equation}\label{edot}
\dot{E}= {R_{2_{ref}}}({\hat{\omega}}_r - {\hat{\omega}}_2) R_2^T
\end{equation}
Next, we define a closed loop energy like function $E_{cl}: TSO(2) \to \mathbb{R}$ as follows
\begin{equation}\label{Ecl}
E_{cl}(E, \dot{E})= K_p \psi(E) + \frac{1}{2} {||\dot{E} ||_{\mathbb{G}_{\mathbb{I}}}}^2
\end{equation}
where $\mathbb{G}_{\mathbb{I}}$ is the metric induced on $SO(2)$ by the $\mathbb{I}$ (see section 5.3.1 in \cite{bulo} for more details), $\mathbb{I} \coloneq 2K_3 - \frac{K_2^2}{2K_1}$, $\psi(E) \coloneq tr\{P(id- E)\}$, $id$ denotes the $2 \times 2$ identity matrix, $P\coloneq diag(c_1,c_2)$, and $ c_1 + c_2 \neq 0$.

\begin{remark} At this stage the problem is entirely focused on $SO(2)$.
\end{remark}
\begin{definition}\label{AGATdef}
The reference trajectory $R_{2_{ref}}$ is \textbf{almost-globally stable with respect to} the closed loop energy like function $E_{cl}$ (defined in \eqref{Ecl}) if, for almost all initial conditions $(R_2(0), \omega_2(0)) \in SO(2) \times \sotwo $, the function $t \to E_{cl}(t)$ is non-increasing.
\end{definition}

\begin{definition}\label{hessdef}
The Hessian of $\psi$ is the symmetric $(0,2)$ tensor field denoted by $Hess \psi$ and defined as $ Hess \psi(q)(v_q, w_q) =  \mathbb{G}_{\mathbb{I}}( v_q, \stackrel{\mathbb{G}_{\mathbb{I}}}{\nabla}_{w_q} grad \psi),$ where $v_q$, $w_q \in T_q SO(2)$ and $grad$ denotes the gradient vector field.
\end{definition}
If $x_0$ is a critical point of $\psi$ and $\theta$ is the local coordinate at $x_0$, then, $Hess(\psi)(x_0)$ in coordinates is $Hess \psi(x_0) = \frac{\partial ^2 \psi}{\partial \theta^2}(x_0)$ (see chapter 13 in \cite{milnor} for details)\\
\begin{definition}\label{navfn}(\cite{kodi})
A function $\psi: SO(2) \to \mathbb{R}$ on $(SO(2), \mathbb{G}_{\mathbb{I}})$ is a navigation function if
\begin{enumerate}
  \item $\psi$ attains a unique minimum.
  \item $Det(Hess \psi(q^*)) \neq 0$ whenever $\mathrm{d}\psi(q^*) = 0$ for some $q^* \in SO(2)$.
\end{enumerate}
\end{definition}
\begin{lemma}\label{lem1}
$\psi$ is a navigation function on $SO(2)$.
\end{lemma}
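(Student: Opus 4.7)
The plan is to reduce everything to a one-parameter calculation, since $SO(2)$ is one-dimensional and admits the global angular coordinate $\theta \mapsto \begin{pmatrix}\cos\theta & -\sin\theta\\ \sin\theta & \cos\theta\end{pmatrix}$. Parametrizing $E(\theta)$ in this way and using $P = \mathrm{diag}(c_1,c_2)$, I would directly expand $\psi(E) = \mathrm{tr}\{P(\mathrm{id}-E)\}$ to obtain a closed-form expression of the form $\psi(\theta) = (c_1+c_2)(1-\cos\theta)$. Having the function in one real variable trivializes both items of Definition \ref{navfn}.

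Next, I would locate the critical points by setting $\tfrac{d\psi}{d\theta} = (c_1+c_2)\sin\theta = 0$. Since $c_1+c_2 \neq 0$ by hypothesis, the critical set is exactly $\{\theta = 0,\; \theta = \pi\}$. Evaluating $\psi$ at these two points gives $0$ and $2(c_1+c_2)$ respectively, which are distinct with opposite signs, so exactly one of the two critical points is the global (and only) minimum: $\theta = 0$ if $c_1+c_2 > 0$, and $\theta = \pi$ if $c_1+c_2 < 0$. This establishes condition (1).

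For condition (2), I would invoke the coordinate-form remark immediately following Definition \ref{hessdef}: at any critical point $q^*$, $\mathrm{Hess}\,\psi(q^*)$ agrees with the ordinary second derivative $\partial^2\psi/\partial\theta^2$. A one-line computation gives $\tfrac{d^2\psi}{d\theta^2} = (c_1+c_2)\cos\theta$, which equals $c_1+c_2$ at $\theta = 0$ and $-(c_1+c_2)$ at $\theta = \pi$. Both are nonzero, so $\det(\mathrm{Hess}\,\psi(q^*)) \neq 0$ at each critical point, completing the verification.

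There is no genuinely hard step here; the only thing that needs care is the appeal to the coordinate expression for the Hessian, which is legitimate because $\mathbb{G}_{\mathbb{I}}$ is a constant (scalar) metric in the angular coordinate on the one-dimensional manifold $SO(2)$, so the Levi-Civita Christoffel symbol vanishes and the covariant Hessian coincides with $\partial^2/\partial\theta^2$ at every point — in particular at critical points, where the distinction would not matter in any case.
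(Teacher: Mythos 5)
Your proof is correct and takes essentially the same route as the paper: both reduce to the explicit parametrization of $SO(2)$, identify the two critical points $\pm\,\mathrm{id}$ (equivalently $\theta=0,\pi$), and verify nondegeneracy of the Hessian there. You are in fact slightly more careful than the paper on two points: you track the sign of $c_1+c_2$ (the paper tacitly assumes $c_1+c_2>0$ when declaring $\mathrm{id}$ the unique minimum), and you compute the second derivative explicitly and justify the coordinate form of the Hessian rather than citing Proposition 11.31 of Bullo--Lewis; the only cosmetic slip is calling $0$ and $2(c_1+c_2)$ ``of opposite signs,'' when all you need is that they are distinct.
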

\begin{proof} We proceed to determine the critical points of $\psi: SO(2) \to \mathbb{R}$.
\begin{align*}
\frac{\mathrm{d}}{\mathrm{d}t} \psi(E) &= - tr(P \dot{E}) = - tr(P E E^T \dot{E})\\
&=- tr\big( (skew(PE)+ sym(PE))(E^T \dot{E})\big)\\
&=  2{\{skew(PE)\}}\breve{} \quad \{E^T \dot{E} \}\breve{}
\end{align*}
The third step follows from the equality $tr(xy) =- \hat{x} \hat{y} $ and the fact that $E^T \dot{E} \in \sotwo$ is skew symmetric. Therefore, the critical points of $\psi$ are the solution to the equation $skew(PE)=0_{2 \times 2}$ or, $PE = E^T P$. Let $E = \begin{pmatrix}
x &y \\ -y &x
\end{pmatrix}$ where $x^2+y^2=1$, therefore, $skew(PE) = \begin{pmatrix}
0 & y(c_1+c_2)\\ -y(c_1+c_2) & 0
\end{pmatrix}$. So, the two critical configurations are given by $\begin{pmatrix}
 1 & 0 \\ 0 & 1
\end{pmatrix}$ and $\begin{pmatrix}
 -1 & 0 \\ 0 & -1
\end{pmatrix}$. Observe that $\psi(E)= tr(P(id-E)) = (1-x)(c_1+c_2)$ and hence, that $id$ is the unique minimum and $-id$ is the unique maximum of $\psi$. It is verified that the Hessian is positive definite at both critical points along the lines of Proposition 11.31 in \cite{bulo}.
\end{proof}
We choose the error dynamics for AGAT of $R_{2_{ref}}$ as the SMS $(SO(2), \mathbb{G}_{\mathbb{I}}, -K_p \mathrm{d}\psi(E) + F_d \dot{E})$, where, $F_d$ is a dissipative force, and, $K_p$ and $\psi$ are defined in \eqref{Ecl}. Then the error dynamics are given by the following equations
\begin{equation} \label{errdyn}
\stackrel{\mathbb{G}_{\mathbb{I}}}{\nabla}_ {\dot{E}} \dot{E}= \mathbb{G}^\sharp _{\mathbb{I}}(-K_p \mathrm{d}\psi(E) + F_d \dot{E})
\end{equation}

\begin{lemma}\label{lem2}
The error dynamics in \eqref{errdyn} is almost-globally asymptotically stable about $(E^*, \dot{E}^*)=(id,0)$
\end{lemma}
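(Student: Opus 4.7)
The plan is to use the closed loop energy $E_{cl}$ as a Lyapunov function on $TSO(2)$, apply LaSalle's invariance principle, and then argue that the non-minimum critical point of $\psi$ has a zero-measure region of attraction.

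First I would differentiate $E_{cl}$ along solutions of \eqref{errdyn}. Using the compatibility of $\stackrel{\mathbb{G}_{\mathbb{I}}}{\nabla}$ with the metric and the definition $grad\,\psi = \mathbb{G}_{\mathbb{I}}^\sharp d\psi$, one obtains
\begin{equation*}
\dot{E}_{cl} = K_p\,\mathbb{G}_{\mathbb{I}}(grad\,\psi,\dot{E}) + \mathbb{G}_{\mathbb{I}}(\dot{E},\stackrel{\mathbb{G}_{\mathbb{I}}}{\nabla}_{\dot{E}} \dot{E}) = \langle F_d \dot{E},\dot{E}\rangle,
\end{equation*}
where the $K_p d\psi$ term in \eqref{errdyn} cancels the potential-derivative contribution. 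Choosing $F_d$ negative definite (the standard dissipation assumption carried over from \cite{anrnb2}) makes $\dot{E}_{cl}\le 0$, so $E_{cl}$ is non-increasing. Since $\psi\ge 0$ and $\|\dot{E}\|_{\mathbb{G}_{\mathbb{I}}}^2\ge 0$, sublevel sets of $E_{cl}$ are compact in $TSO(2)$, hence all trajectories are bounded and defined for all $t\ge 0$.

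Next I would invoke LaSalle's invariance principle. Trajectories converge to the largest invariant set contained in $\{(E,\dot{E})\in TSO(2): \dot{E}_{cl}=0\} = \{\dot{E}=0\}$. On this set, invariance forces $\ddot{E}=0$ in the covariant sense, so \eqref{errdyn} reduces to $\mathbb{G}_{\mathbb{I}}^\sharp(-K_p d\psi(E))=0$, i.e.\ $E$ is a critical point of $\psi$. By Lemma \ref{lem1}, the critical set is exactly $\{id,-id\}$. Thus the $\omega$-limit set of every trajectory lies in $\{(id,0),(-id,0)\}$.

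The final and main step is to upgrade LaSalle's conclusion to almost-global asymptotic stability of $(id,0)$. The minimum $(id,0)$ is Lyapunov-stable (since $E_{cl}$ is a local positive definite function there) and is locally asymptotically stable by a standard linearization/Hessian argument at the strict minimum of $\psi$. For the non-minimum equilibrium $(-id,0)$, I would argue as in Proposition 6.3/6.4 of \cite{kodi} (or the analogous step in \cite{anrnb2}): because $\psi$ is a navigation function with a nondegenerate Hessian that is not positive definite at $-id$ (it is actually a maximum here, since $\psi(-id) = 2(c_1+c_2)$ is the global maximum), the linearization of \eqref{errdyn} at $(-id,0)$ has at least one eigenvalue with positive real part, so $(-id,0)$ is a hyperbolic saddle/unstable equilibrium of the closed-loop flow on $TSO(2)$. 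Its stable manifold therefore has dimension strictly less than $\dim TSO(2)=2$, hence Lebesgue measure zero. Combining this with the LaSalle conclusion shows that from almost every initial condition the trajectory converges to $(id,0)$, which is the definition of AGAS.

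The routine parts are the Lyapunov computation and applying LaSalle; the only delicate point is checking the hyperbolicity/instability of the non-minimum critical point so that its region of attraction is measure-zero, and this follows cleanly from Lemma \ref{lem1} together with the explicit form $\psi(E)=(1-x)(c_1+c_2)$ computed in its proof.
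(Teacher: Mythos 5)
Your proposal is correct and follows essentially the same route as the paper: the Lyapunov computation $\dot{E}_{cl}=\langle F_d\dot{E},\dot{E}\rangle\le 0$ is identical, and your explicit LaSalle-plus-saddle-instability argument (using that $\psi$ is a navigation function whose non-minimum critical point $-id$ yields a hyperbolic equilibrium with a measure-zero stable manifold) is precisely the content of the external result the paper invokes as Lemma 1 of \cite{anrnb2}. The only difference is that you unpack that citation, correctly noting along the way that $-id$ is a nondegenerate maximum of $\psi$ and that strict dissipativity (negative definiteness) of $F_d$ is needed for the LaSalle step.
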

\begin{proof}

Observe that  $E_{cl}(id,0)=0$ and $E_{cl}(q,0)>0$ for all $(q,0) \in T SO(2)$ in a neighborhood of $(id,0)$. Also,
\begin{align*}
\frac{\mathrm{d}}{\mathrm{d}t} E_{cl}(E, \dot{E}) &= \langle K_p \mathrm{d} \psi(E),\dot{E}  \rangle + \ll \dot{E}  , \stackrel{\mathbb{G}_{\mathbb{I}}}{\nabla}_{\dot{E}} \dot{E}   \gg \\
&=\langle K_p \mathrm{d} \psi(E),\dot{E} \rangle \\
&+ \mathbb{G}_{\mathbb{I}}(\dot{E},  -\mathbb{G}_{\mathbb{I}}^{\sharp} (K_p \mathrm{d} \psi(E) - F_{d}(\dot{E}  ) )\\
&= K_p \langle \mathrm{d} \psi, \dot{E}   \rangle- K_p \langle \mathrm{d}\psi, \dot{E}   \rangle + \langle F_{d}(\dot{E}  ), \dot{E}   \rangle \leq 0
\end{align*}
as $F_{d}$ is dissipative. Therefore $E_{cl}$ is a Lyapunov function and the error dynamics in \eqref{errdyn} is locally stable around $(id,0)$. The almost-global stability result follows from Lemma 1 in \cite{anrnb2} as $\psi$ is a navigation function, $E(g,g)=id$ for all $g \in SO(2)$ and $(\psi, E)$ is a compatible pair.
\end{proof}
\begin{remark}
The 'zero error' equilibrium state mentioned in the beginning of this section is $(id,0)$. The configuration at which the navigation function achieves its minimum is the 'zero error' configuration and the RHS of \eqref{errdyn} is the control vector field which drives the error trajectory to $(id,0)$.
\end{remark}

\subsection{Separation principle and AGAT of $R_{2_{ref}}(t)$}
The separation principle has two important steps. For the first step, the LHS of \eqref{errdyn} is expressed in terms of the trajectories $R_2(t)$, $R_{2_{ref}}(t)$ and their velocities. For the second step the feedback terms to be introduced through $u_1$ are identified. The following theorem states the main result of this paper.
\begin{theorem}\label{thm}
Consider the pendubot described by equations \eqref{dyneq13}-\eqref{dyneq23}, the smooth bounded reference trajectory ${R_{2_{ref}}}:\mathbb{R}^+ \to SO(2)$ is almost-globally asymptotically tracked by the control law
\begin{align} \label{thmeqn}
u_1 = &\frac{2K_1}{K_2} \bigg\{ \underbrace{ \{Ad_{R_2}^*u\}\breve{}}_{\text{PD term}}\\ \nonumber
&+ \underbrace{\frac{K_2 }{2 K_1}(\alpha(2\omega_1 + \omega_2) +\Gamma_1) - \alpha \omega_1 + \beta(\omega_1^2 + \omega_1 \omega_2) -\Gamma_2}_{\text{cancelling the quadratic terms}}\\ \nonumber
&- \underbrace{ \{ \mathbb{I} (\dot{{\hat{\omega}}}_r +[{\hat{\omega}}_2,{\hat{\omega}}_r])\} \breve{} \quad }_
{\text{feedforward terms}}\bigg\}
\end{align}
where $R_2(t)$ is the controlled trajectory for link 2, and $u \in \mathfrak{g}^*$ is defined as $u= E^T(-2 K_p skew(PE) + F_d \dot{E})$.
\end{theorem}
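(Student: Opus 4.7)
The strategy is the two-step separation principle announced just before the theorem: first derive the physical error dynamics directly from \eqref{dyneq13}--\eqref{dyneq23}, then choose $u_{1}$ so that these derived dynamics coincide term-by-term with the abstract SMS error dynamics \eqref{errdyn}. Since Lemma \ref{lem2} already gives AGAS of \eqref{errdyn} about $(E,\dot E)=(id,0)$, and $E(t)=id$ is equivalent to $R_{2}(t)=R_{2_{\mathrm{ref}}}(t)$, matching the two dynamics is enough to conclude AGAT of $R_{2_{\mathrm{ref}}}$.

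For Step 1, I differentiate $\dot E=R_{2_{\mathrm{ref}}}(\hat\omega_{r}-\hat\omega_{2})R_{2}^{T}$ once more. Using that $SO(2)$ is abelian, so that $Ad_{R_{2}}$ is the identity on $\mathfrak{g}$ and $[\hat\omega_{2},\hat\omega_{r}]=0$, the body-frame angular acceleration of the error reduces, up to the usual trivialization, to $\dot\omega_{r}-\dot\omega_{2}$. I then substitute $\dot\omega_{2}$ from the unactuated-joint equation \eqref{dyneq23}. The resulting expression contains $u_{1}$ through the $\tfrac{K_{2}}{2K_{1}}u_{1}$ term, together with the drift/gravity pieces $\Gamma_{1},\Gamma_{2}$ and the velocity-quadratic pieces built from $\alpha,\beta,\omega_{1},\omega_{2}$. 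On the target side, the RHS of \eqref{errdyn} is $\mathbb{G}_{\mathbb I}^{\sharp}(-K_{p}\,\mathrm d\psi(E)+F_{d}\dot E)$; by the computation already carried out in the proof of Lemma \ref{lem1}, $\mathrm d\psi(E)$ is represented in the body frame of $E$ by $2\,\{skew(PE)\}\breve{}$, so that this RHS appears naturally through the covector $u=E^{T}(-2K_{p}\,skew(PE)+F_{d}\dot E)$ after pulling back via the appropriate trivialization.

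For Step 2, I solve for $u_{1}$ by matching. The three bracketed blocks of \eqref{thmeqn} are engineered for exactly this: the \emph{cancelling} block removes $\Gamma_{1},\Gamma_{2}$ and the terms in $\alpha,\beta,\omega_{i}$ that appear in \eqref{dyneq23} but are absent from \eqref{errdyn}; the \emph{feedforward} block $\{\mathbb I(\dot{\hat\omega}_{r}+[\hat\omega_{2},\hat\omega_{r}])\}\breve{}$ supplies the reference acceleration so that $\dot\omega_{r}$ combines with $\dot\omega_{2}$ into the body-frame second derivative of $E$; and the \emph{PD} block $\{Ad_{R_{2}}^{\ast}u\}\breve{}$ injects the prescribed potential plus dissipative forcing. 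A direct substitution of \eqref{thmeqn} into \eqref{dyneq23}, after telescoping the $\tfrac{K_{2}}{2K_{1}}u_{1}$ term against the cancelling block, leaves precisely
\[
\mathbb I\,\dot\omega_{2}\;=\;\{\mathbb I(\dot{\hat\omega}_{r}+[\hat\omega_{2},\hat\omega_{r}])\}\breve{}\;-\;\{Ad_{R_{2}}^{\ast}u\}\breve{},
\]
which is the body-frame form of \eqref{errdyn}. Lemma \ref{lem2} then delivers AGAS of the error about $(id,0)$, hence AGAT of $R_{2_{\mathrm{ref}}}$.

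The main obstacle I anticipate is bookkeeping in Step 1: keeping the left- versus right-trivializations, the $Ad^{\ast}$ action on $u$, and the position dependence of $\mathbb I=2K_{3}-K_{2}^{2}/(2K_{1})$ through $R_{2}$ all mutually consistent, so that the body-frame identities used in the matching mesh exactly with the form of $u$ inside \eqref{thmeqn}. On the abelian one-dimensional group $SO(2)$ this is essentially a scalar computation and the Christoffel correction from the varying $\mathbb I$ is absorbed cleanly; the same manipulations are what will require real care in the $n$-link generalization in Section V, where the non-trivial bracket $[\hat\omega_{2},\hat\omega_{r}]$ and genuine $Ad_{R_{2}}^{\ast}$ actually enter.
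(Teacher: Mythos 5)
Your proposal is correct and follows essentially the same route as the paper: differentiate the error $E=R_{2_{\mathrm{ref}}}R_2^T$, use the abelian structure of $SO(2)$ to reduce the covariant derivative to $\frac{\mathrm{d}}{\mathrm{d}t}(E^T\dot E)$, equate the resulting expression $\mathbb{I}\dot\omega_2=-Ad_{R_2}^{*}u+\mathbb{I}(\dot{\hat\omega}_r+[\hat\omega_2,\hat\omega_r])$ with the form of $\mathbb{I}\dot\omega_2$ obtained from \eqref{dyneq23}, solve for $u_1$, and invoke Lemma \ref{lem2} for AGAS of the error dynamics. The term-by-term matching you describe (PD, cancellation, feedforward) is exactly the comparison of \eqref{iomegadot} with \eqref{iomd} carried out in the paper.
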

\begin{proof}
Let the body velocity of the error trajectory be defined as $\eta \coloneq T_E L_{E^{-1}} \dot{E}$. The LHS of \eqref{errdyn} can be simplified as follows
\begin{align}\label{errdyn1}
\stackrel{\mathbb{G}_{\mathbb{I}}}{\nabla}_ {\dot{E}} \dot{E}
&=T_{id} L_E \{ \frac{\mathrm{d}}{\mathrm{d}t} (E^T \dot{E})+ \stackrel{\sotwo}{\nabla}_\eta \eta \}\\ \nonumber
&= T_{id} L_E \{ - R_2 \dot{\hat{\omega}}_2 R_2^T +R_2 ([\hat{\omega}_2, \hat{\omega}_r] -\dot{\hat{\omega}}_2 ) R_2^T + \stackrel{\sotwo}{\nabla}_\eta \eta \} \\ \nonumber
&=  T_{id}  L_E \{- R_2 \dot{\omega}_2 R_2^T +  R_2 ([\hat{\omega}_2, \hat{\omega}_r] +\dot{\hat{\omega}}_r ) R_2^T \}
\end{align}
The first equality follows from Lemma 3 in \cite{anrnb1}. The second equality is given by the following simplification
\begin{align*}
\frac{\mathrm{d}}{\mathrm{d}t} (E^T \dot{E})&= \frac{\mathrm{d}}{\mathrm{d}t}\{ E^T R_{2_{ref}} ({\hat{\omega}}_r - {\hat{\omega}}_2  )R_2^T\}\\
&= \frac{\mathrm{d}}{\mathrm{d}t} \{ R_2({\hat{\omega}}_r - {\hat{\omega}}_2  )R_2^T \}\\
&= \dot{R}_2 \hat{\omega}_r R_2^T + R_2 \dot{\hat{\omega}}_r R_2^T - R_2 \hat{\omega}_r R_2^T \dot{R}_2 R_2^T\\
&-\dot{R}_2 \hat{\omega}_2 R_2^T - R_2 \dot{\hat{\omega}}_2 R_2^T + R_2 \hat{\omega}_2 R_2^T \dot{R}_2 R_2^T\\
&= R_2 (\dot{\hat{\omega}}_r +[\hat{\omega}_2, \hat{\omega}_r] -\dot{\hat{\omega}}_2 ) R_2^T
\end{align*}
The third equality follows from the fact that $SO(2)$ is Abelian, and therefore, $\stackrel{\mathfrak{g}}{\nabla}_\eta \eta=0$.
\newline
Observe (from \eqref{errdyn}) that $\stackrel{\mathbb{G}_{\mathbb{I}}}{\nabla}_ {\dot{E}} \dot{E} = T_{id} L_E \mathbb{I}^\sharp (u)$, where $u = E^T(-2 K_p skew(PE) + F_d \dot{E})$ is the stabilizing control. Now from \eqref{errdyn1},
\[\mathbb{I}^\sharp (u) = - R_2 \dot{\omega}_2 R_2^T +   R_2 ([\hat{\omega}_2, \hat{\omega}_r] +\dot{\hat{\omega}}_r ) R_2^T
\]
Rearranging the above equation leads to
\begin{align}\label{iomegadot}
\mathbb{I} \dot{\omega}_2 &= -Ad_{R_2}^*(u - \mathbb{I} Ad_{R_2}([\hat{\omega}_2, \hat{\omega}_r] -\dot{\hat{\omega}}_2 ) )\\ \nonumber
&=  -Ad_{R_2}^*u + \mathbb{I}(\dot{{\hat{\omega}}}_r +[{\hat{\omega}}_2,{\hat{\omega}}_r])
\end{align}
However, from \eqref{dyneq23} we have,
\begin{align}\label{iomd}
\mathbb{I}\dot{\omega}_2 = &-\frac{K_2}{2K_1}(u_1) +\frac{K_2 }{2K_1} (\alpha(2\omega_1 + \omega_2) + \Gamma_1)- \alpha \omega_1 \\ \nonumber
&+ \beta(\omega_1^2 + \omega_1 \omega_2)- \Gamma_2
\end{align}
Comparing  \eqref{iomegadot} and \eqref{iomd} yields the expression for $u_1$ as in \eqref{thmeqn}. Essentially, we cancel out the last three terms of \eqref{iomd} and introduce the proportional derivative (or PD) control in \eqref{iomegadot} through $-\frac{K_2}{2K_1}(u_1)$. By Lemma 2, the error dynamics for the tracking problem (in \eqref{errdyn}) is almost-globally asymptotically stable about $(id,0)$. Therefore, $u_1$ is the desired control for almost-global tracking of ${R_{2_{ref}}}(t)$.
\end{proof}

% \stackrel{\mathbb{G}_{\mathbb{I}}}{\nabla}_ {\dot{E}} \dot{E}= T_e L_E(\frac{\mathrm{d}}{\mathrm{d}t} (E^T \dot{E}))
 \begin{remark}
In the control law \eqref{thmeqn}, the terms $\{Ad_{R_2}^*u\}\breve{} -  \{ \mathbb{I} (\dot{{\hat{\omega}}}_r +[{\hat{\omega}}_2,{\hat{\omega}}_r])\} \breve{}$ denote the PD plus feed-forward terms for AGAS of the error dynamics. The remaining terms are the quadratic terms that essentially cancel out identical terms in \eqref{iomd} and impart the SMS structure to the error dynamics (in \eqref{iomegadot}).
\end{remark}
\begin{remark}
Note that the cancellation of the quadratic terms is not to be mistaken to be {\it feedback linearization}; it is an
operation solely from a mechanics point of view to impart a mechanical structure to the error dynamics. This feature is
elaborated in  \cite{madhumtp}.
\end{remark}
\section{AGAT of unactuated links of an $n$-link robot}
Consider an $n$-link robot with $m$ actuated (or active) links and $l$ unactuated (or passive) links. Let $q_1 \in \underbrace{SO(2) \times \dotsc \times SO(2)}_\text{m times}$ and $q_2 \in \underbrace{SO(2) \times \dotsc \times SO(2)}_\text{l times}$ denote the generalized variables of active and passive links respectively. The equations of motion can be expressed as
\begin{subequations}
\fbox{
 \addtolength{\linewidth}{-2\fboxsep}%
 \addtolength{\linewidth}{-2\fboxrule}%
 \begin{minipage}{\linewidth}
\begin{equation} \label{nlink1}
M_{11} \ddot{q}_1 + M_{12} \ddot{q}_2 + h_1 + \Phi_1= u_1
\end{equation}
\begin{equation}\label{nlink2}
M_{21} \ddot{q}_1 + M_{22} \ddot{q}_2 + h_2 + \Phi_2= 0
\end{equation}
\end{minipage}
}
\end{subequations}
where $M_{11} \in \mathbb{R}^{m \times m}$, $M_{12} \in \mathbb{R}^{m \times l}$, $M_{21} \in \mathbb{R}^{l \times m}$ and $M_{22} \in \mathbb{R}^{l \times l}$. $h_1$ and $h_2$ contain Coriolis and centrifugal terms (quadratic velocity terms), $\Phi_1$, $\Phi_2$ contain gravitational terms and, $u_1 \in T_{q_1}\underbrace{SO(2) \times \dotsc \times SO(2)}_\text{m times}$ is the control vector field to be chosen such that $q_2(t)$ tracks a given reference trajectory $q_{2_{ref}}(t)$. The equations \eqref{nlink1}-\eqref{nlink2} can be simplified as follows
\begin{subequations}
\begin{equation}\label{nlink3}
\ddot{q}_1= M_{11}^{-1} (u_1 - \Phi_1 -h_1 -M_{12}\ddot{q}_2)
\end{equation}
\begin{align}\label{nlink4}
(M_{22}- M_{21}M_{11}^{-1}M_{12}) \ddot{q}_2 &-M_{21}M_{11}^{-1} (h_1+ \Phi_1) +h_2 + \Phi_2 \\ \nonumber
&=- M_{21}M_{11}^{-1} u_1
\end{align}
\end{subequations}
\begin{remark}
$\mathbb{I} \coloneq M_{22}- M_{21}M_{11}^{-1}M_{12}$ is invertible as it is the Schur complement of the invertible
mass matrix $M \coloneq \begin{pmatrix}
M_{11} & M_{12}\\ M_{21} & M_{22}
\end{pmatrix} $.
\end{remark}

Let $q_{2_{ref}} (t) \in G$ denote the reference trajectory, where $G=\underbrace{SO(2) \times \dotsc \times SO(2)}_\text{l times}$. The group operation is defined component-wise as $L_{(g_{1},\dotsc, g_{l})} (h_{1}, \dotsc, h_{l})= (g_1h_1, \dotsc, g_lh_l)$. Similar to Section IV, we first express the error dynamics for the tracking problem. Let $E= q_{2_{ref}} (t) q_2^{-1}(t)$ be the error trajectory. The error is identity when the $q_2(t)$ coincides with $q_{2_{ref}} (t)$. The error dynamics is given by \eqref{errdyn}. Simplifying the LHS of \eqref{errdyn}, we have,
\begin{align}\label{errdyn2}
\stackrel{\mathbb{G}_{\mathbb{I}}}{\nabla}_ {\dot{E}} \dot{E}
&=T_{id} L_E \{ \frac{\mathrm{d}}{\mathrm{d}t} (E^T \dot{E})+ \stackrel{\mathfrak{g}}{\nabla}_\eta \eta \}\\ \nonumber
&= T_{id} L_E \{ \ddot{q}_2 q_2^{-1} + {\stackrel{\mathfrak{g}}{\nabla}}_\eta \eta + \frac{\mathrm{d}}{\mathrm{d}t}(\dot{q}_{2_{ref}} q_{2_{ref}}^{-1})  \}
\end{align}
where $\eta \coloneq E^{-1} \dot{E}$. In the following theorem we state the necessary condition for AGAT of $q_{2_{ref}} (t)$.
\begin{theorem} \label{thm2}
Consider the $n$-link pendubot described by the equations \eqref{nlink1}-\eqref{nlink2} and the smooth bounded reference trajectory $q_{2_{ref}} : \mathbb{R} \to  G$. If $rank(M_{21}) \geq l$ ($M_{21}$ has full column rank), then the following $u_1$ is the control vector field so that $q_2(t)$ almost-globally and asymptotically tracks $q_{2_{ref}}(t)$.
\begin{align}\label{thm2eqn}
u_1 &= M_{11} M_{21}^{-1}(T_e R_{q_2}(u -  \mathbb{I}\{{\stackrel{\mathfrak{g}}{\nabla}}_\eta \eta + \frac{\mathrm{d}}{\mathrm{d}t}(\dot{q}_{2_{ref}} q_{2_{ref}}^{-1})\})\\ \nonumber
&+ h_2- M_{21}M_{11}^{-1} h_1)
\end{align}
where $u = E^T(-K_p \mathrm{d}\phi(E) + F_d \dot{E})$ and $\phi$ is a navigation function on $G$.
\end{theorem}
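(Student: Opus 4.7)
The plan is to apply the two-step separation principle from Theorem 1 to the product Lie group $G = SO(2) \times \dotsb \times SO(2)$ ($l$ factors). First I would fix a navigation function $\phi$ on $G$---constructed, for example, as a sum of single-factor trace potentials of the form used in Lemma 1---and take the target error dynamics to be the SMS $(G, \mathbb{G}_{\mathbb{I}}, -K_p \mathrm{d}\phi(E) + F_d \dot{E})$ appearing in \eqref{errdyn}. Writing $u = E^T(-K_p \mathrm{d}\phi(E) + F_d \dot{E}) \in \mathfrak{g}^*$ for the stabilizing covector, the identity $\stackrel{\mathbb{G}_{\mathbb{I}}}{\nabla}_{\dot{E}} \dot{E} = T_{id} L_E \mathbb{I}^\sharp(u)$ gives one expression for the covariant acceleration of the error.

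For step 1, I would equate this with the simplified form already derived in \eqref{errdyn2}, cancel $T_{id} L_E$, and solve for $\mathbb{I}\ddot{q}_2$ to obtain
\[
\mathbb{I} \ddot{q}_2 = T_e R_{q_2}(u) - \mathbb{I}\bigl( \stackrel{\mathfrak{g}}{\nabla}_\eta \eta + \tfrac{\mathrm{d}}{\mathrm{d}t}(\dot{q}_{2_{ref}} q_{2_{ref}}^{-1}) \bigr),
\]
which plays the role of \eqref{iomegadot} in the proof of Theorem 1. For step 2, the plant equation \eqref{nlink4} independently gives $\mathbb{I}\ddot{q}_2$ as a function of the unknown $u_1$; equating the two yields an $l$-dimensional linear equation for $u_1$ whose coefficient matrix is $M_{21} M_{11}^{-1}$. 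Since $M_{11}$ is invertible (it is a principal block of the positive-definite mass matrix $M$), solvability is equivalent to $M_{21}$ having full row rank, which is precisely the hypothesis $\mathrm{rank}(M_{21}) \geq l$. Letting $M_{21}^{-1}$ denote any right inverse of $M_{21}$ then produces the closed-form control \eqref{thm2eqn}.

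The AGAT conclusion follows as in Lemma 2, now applied to the product group $G$. By construction $u_1$ makes the closed-loop error dynamics coincide with \eqref{errdyn}; the energy-like function $E_{cl}$ is non-increasing along these dynamics because $F_d$ is dissipative, and because $\phi$ is a navigation function on $G$ while $E(g,g) = id$, the compatible-pair argument invoked at the end of Lemma 2 (i.e.\ Lemma~1 of \cite{anrnb2}) rules out almost all undesired critical configurations, leaving $(id, 0)$ as the unique almost-global attractor.

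The main obstacle I anticipate is the $m > l$ case, where the right inverse of $M_{21}$ is not unique and one must verify that any admissible choice yields the same error-dynamics reduction; this is essentially a degree-of-freedom allocation issue among the surplus actuators, and the theorem as stated needs only one valid right inverse. A secondary point worth flagging is that, since $G$ is Abelian, $\stackrel{\mathfrak{g}}{\nabla}_\eta \eta$ vanishes identically and \eqref{thm2eqn} could be simplified; the term is retained in the statement to foreshadow the non-Abelian generalization. Finally, one should confirm that a navigation function does exist on $G = (SO(2))^l$, which follows by taking $\phi(E_1, \dotsc, E_l) = \sum_{i=1}^{l} \mathrm{tr}(P_i(id - E_i))$ with each $P_i$ chosen as in Lemma 1.
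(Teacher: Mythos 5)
Your proposal follows essentially the same route as the paper: equate the SMS error dynamics $\stackrel{\mathbb{G}_{\mathbb{I}}}{\nabla}_{\dot{E}}\dot{E}=T_{id}L_E\mathbb{I}^{\sharp}u$ with the simplified covariant acceleration in \eqref{errdyn2}, substitute $\mathbb{I}\ddot{q}_2$ from \eqref{nlink4}, solve the resulting linear equation for $u_1$ using the rank condition on $M_{21}$, and conclude AGAS of the error dynamics as in Lemma~\ref{lem2}. Your clarification that the hypothesis is really full \emph{row} rank of the $l\times m$ block $M_{21}$ (so that $M_{21}^{-1}$ denotes a right inverse when $m>l$, any choice of which works) is correct and tightens the paper's wording without changing the argument.
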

\begin{proof}
From \eqref{errdyn}, we have $\stackrel{\mathbb{G}_{\mathbb{I}}}{\nabla}_ {\dot{E}} \dot{E}= T_{id} L_E \mathbb{I}^\sharp u$, where $u = E^T(-K_p \mathrm{d}\psi(E) + F_d \dot{E})$ and $\phi$ is a navigation function on $G$. Substituting $T_{id} L_E \mathbb{I}^\sharp u$ in the LHS of \eqref{errdyn2}, we obtain
\begin{equation}\label{sepstep1}
 u =  \mathbb{I}\ddot{q}_2 q_2^{-1} + \mathbb{I}\{{\stackrel{\mathfrak{g}}{\nabla}}_\eta \eta + \frac{\mathrm{d}}{\mathrm{d}t}(\dot{q}_{2_{ref}} q_{2_{ref}}^{-1})\}
\end{equation}
Substituting for $\mathbb{I} \ddot{q}_2$ from \eqref{nlink4}, we have,
\begin{align}\label{sepstep2}
u&=(M_{21}M_{11}^{-1} h_1 -h_2 + M_{21}M_{11}^{-1} u_1) q_2^{-1}\\ \nonumber
&+ \mathbb{I}\{{\stackrel{\mathfrak{g}}{\nabla}}_\eta \eta + \frac{\mathrm{d}}{\mathrm{d}t}(\dot{q}_{2_{ref}} q_{2_{ref}}^{-1})\}
\end{align}
This implies,
\begin{align}
M_{21}M_{11}^{-1} u_1 &= T_e R_{q_2}(u-  \mathbb{I}\{{\stackrel{\mathfrak{g}}{\nabla}}_\eta \eta + \frac{\mathrm{d}}{\mathrm{d}t}(\dot{q}_{2_{ref}} q_{2_{ref}}^{-1})\})\\ \nonumber
&+ h_2+ \Phi_2 - M_{21}M_{11}^{-1} (h_1 + \Phi_1).
\end{align}
As, $M_{21}$ has full column rank, therefore, $M_{21}^{-1}$ exists, hence,
\begin{align}
u_1 = &M_{11} M_{21}^{-1}(T_e R_{q_2}(u -  \mathbb{I}\{{\stackrel{\mathfrak{g}}{\nabla}}_\eta \eta + \frac{\mathrm{d}}{\mathrm{d}t}(\dot{q}_{2_{ref}} q_{2_{ref}}^{-1})\})\\ \nonumber
&+ h_2 + \Phi_2- M_{21}M_{11}^{-1} (h_1+\Phi_1)).
\end{align}
Therefore, using a separation principle, we have introduced terms through $u_1$, so that the error dynamics in \eqref{errdyn} retains its SMS structure. As $\phi$ is a navigation function, by a proof similar to Lemma \ref{lem2}, the error dynamics is AGAS about $(e,0)$ where $e$ is the identity of $G$. Therefore, $u_1$ is the desired control for AGAT of $q_{2_{ref}}(t)$.
\end{proof}
\begin{remark}
The condition imposed on $M_{21}$ is called 'strong inertial coupling' which means that the number of active degrees of freedom must be atleast as many as the passive degrees in order to solve the tracking problem. This is also observed by the authors in \cite{spong1994pfl}. However, in \cite{spong1994pfl}, the authors use a partial feedback linearization approach to achieve \textit{local} tracking, which means the initial conditions of the system are close to the initial conditions of the reference trajectory. In both Theorems \ref{thm} and \ref{thm2}, the initial conditions of the unactuated joint(s) of the pendubot are allowed to lie in a dense set in $T(SO(2)) $ and $TG$ respectively.
\end{remark}
\begin{remark}
The navigation function $\phi$ in Theorem \ref{thm2} which is necessary for AGAS of the error dynamics can be obtained by composing $\psi$ (in $SO(2)$) $l$ times, as elaborated in Section 3 in \cite{cowan}.
\end{remark}
\section{Simulation results}
For the purpose of simulation, the parameters of the pendubot are: $m_1=0.25 kg$, $m_2=0.2 kg$, $l_1=l_2= 0.5 m$, $g = 9.8 ms^{-2}$ with the following initial conditions:
\begin{equation*}
 R_1(0)= \begin{pmatrix} 1 & 0 \\ 0 & 1 \end{pmatrix}
, \;\;  \omega_1(0) =-1 \quad rad s^{-1},
  \end{equation*}
  \begin{equation*}
   R_2(0)= \begin{pmatrix} 0.4794 & 0.87758 \\ -0.87758 & 0.4794 \end{pmatrix}
 ,\;\; \omega_2(0) = 2 \quad rad s^{-1}
 \end{equation*}
 The reference trajectory is given by
\begin{align*}
	 &R_{2_{ref}}(t) = \begin{pmatrix} \sin(t+\pi/4) & \cos(t+\pi/4) \\ -\cos(t+\pi/4) & \sin(t+\pi/4) \end{pmatrix}, \; \\
     &\omega_{2_{ref}}(t) =1 \quad rad s^{-1}, \; t \geq 0.
\end{align*}
 The AGAT control in \eqref{thmeqn} is implemented by considering $K_p= 2.3$, $F_d=- diag(1.5 \quad 2)$, $P = diag(1 \quad 1.5)$. In this simulation and all others to follow, the differential equations are solved using MATLAB 2014a and the trajectory for the second link (in blue) is compared with the reference trajectory (in red). The first two subfigures plot $R_2(11)= cos(\theta_1)$ with $R_{2_{ref}}(11)$ and $R_2(12)= sin(\theta_1)$ with $R_{2_{ref}}(12)$ respectively. Therefore, they correspond to $x$ and $y$ coordinates in a circle embedded in $\mathbb{R}^2$. The third subfigure shows the first coordinate $E(11)$ of $E(t) \in SO(2)$. As, $E(t) \to id$, it is seen that $E(11)$ converges to $1$. The fourth subfigure shows control effort $|u_1|$. In the fifth subfigure, we plot $\omega_1(t)$ and observe that it remains bounded. The first simulation results are shown in Figure ~\ref{fig2ch3}.\\
\begin{figure}[!h]
\centering
\begin{subfigure}[b]{0.22\textwidth}
  \includegraphics[height=2cm, width=\linewidth]{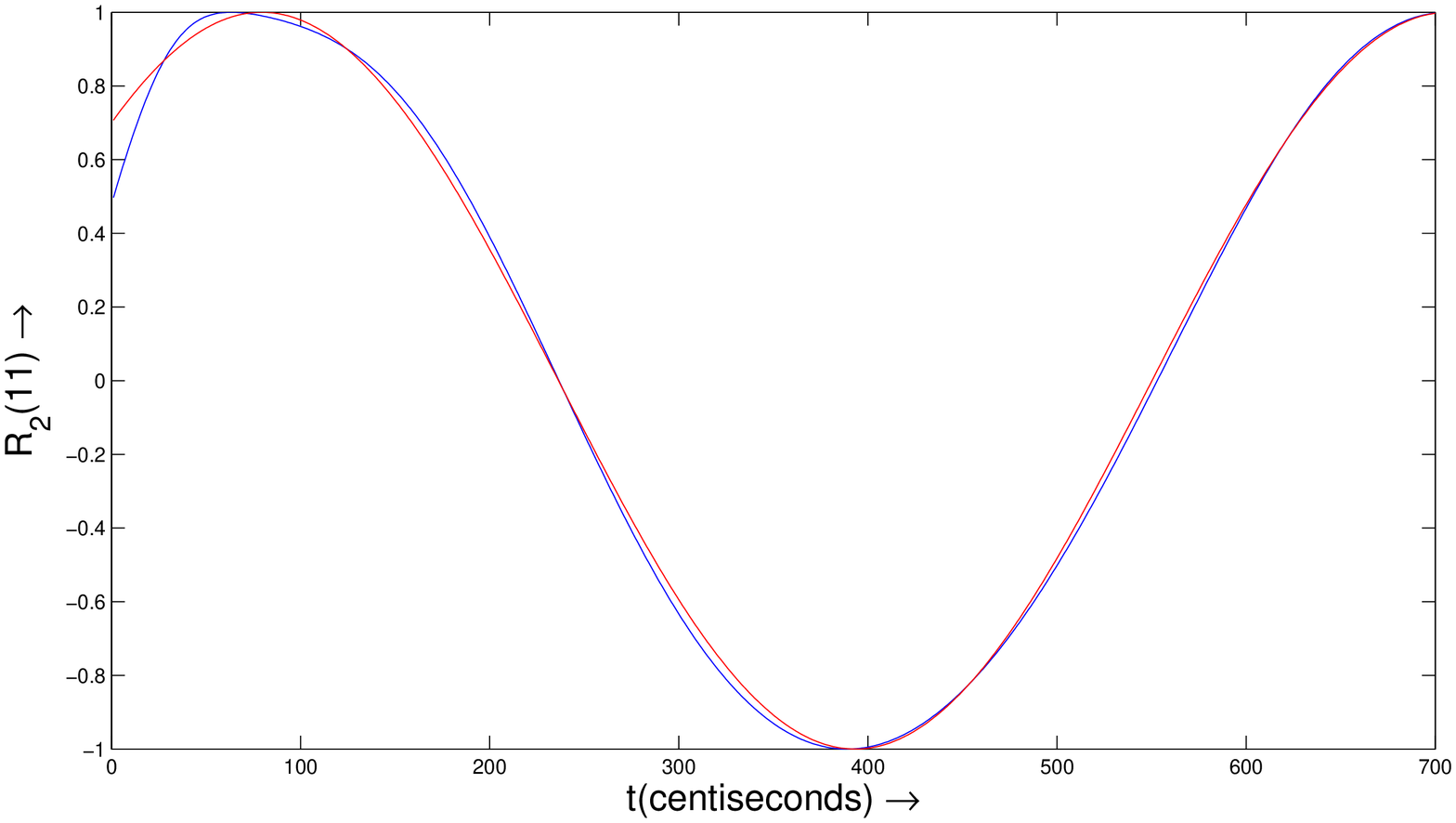}
  \end{subfigure}
\begin{subfigure}[b]{0.22\textwidth}
  \includegraphics[height=2cm, width=\linewidth]{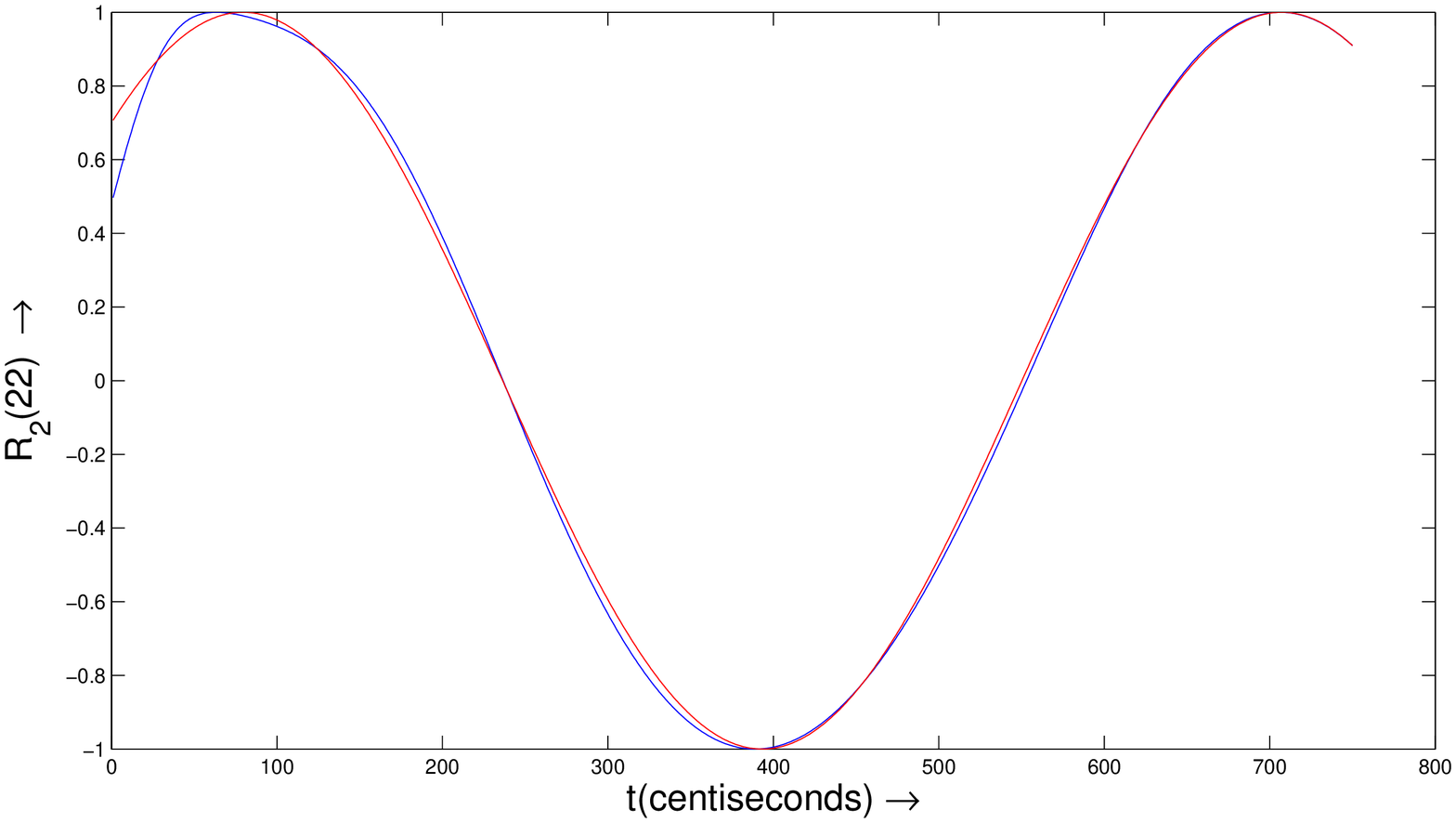}
\end{subfigure}
 \begin{subfigure}[b]{0.22\textwidth}
  \includegraphics[height=2cm, width=\linewidth]{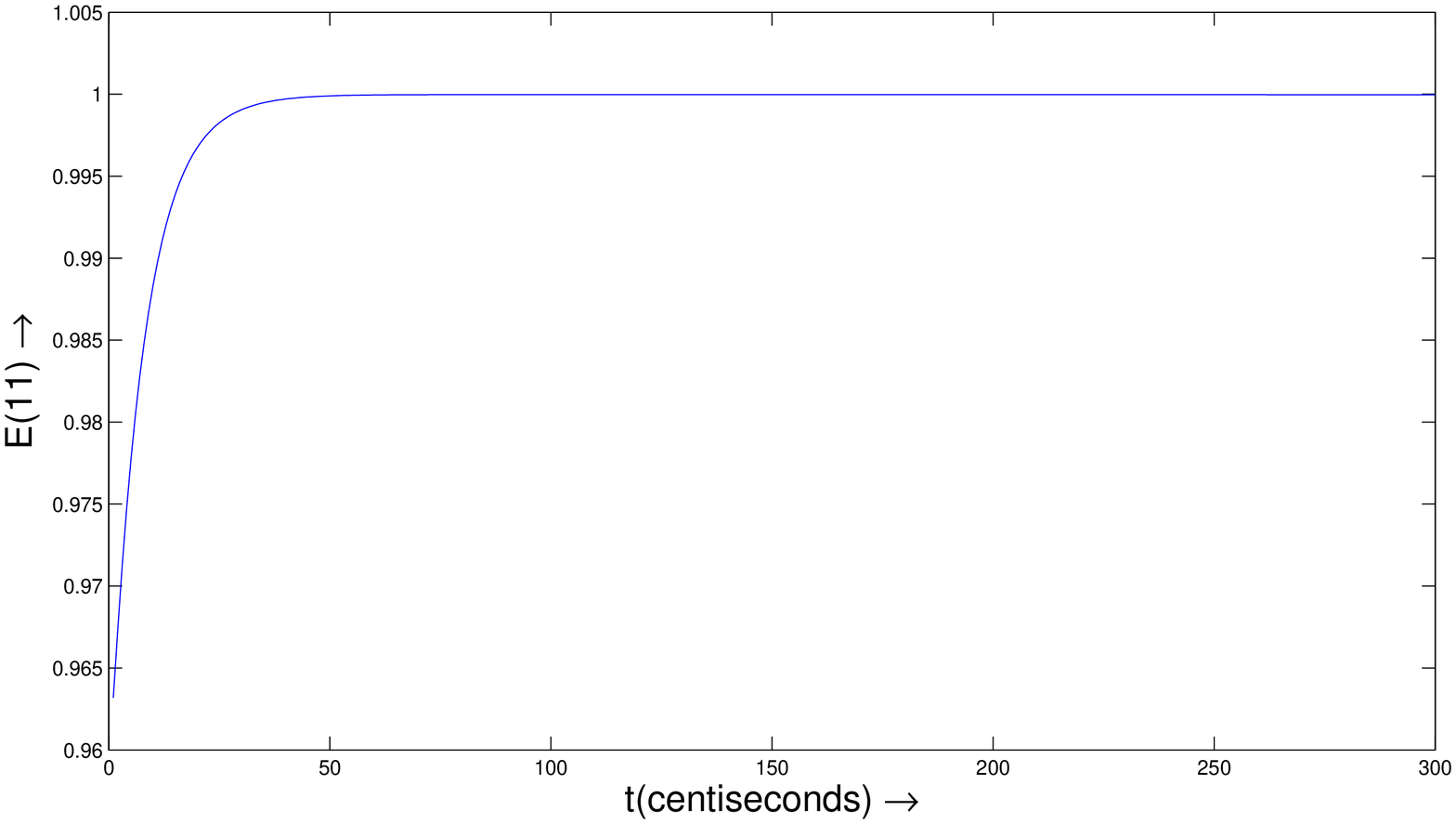}
\end{subfigure}
\begin{subfigure}[b]{0.22\textwidth}
  %\centering
  \includegraphics[height=2cm, width=\linewidth]  {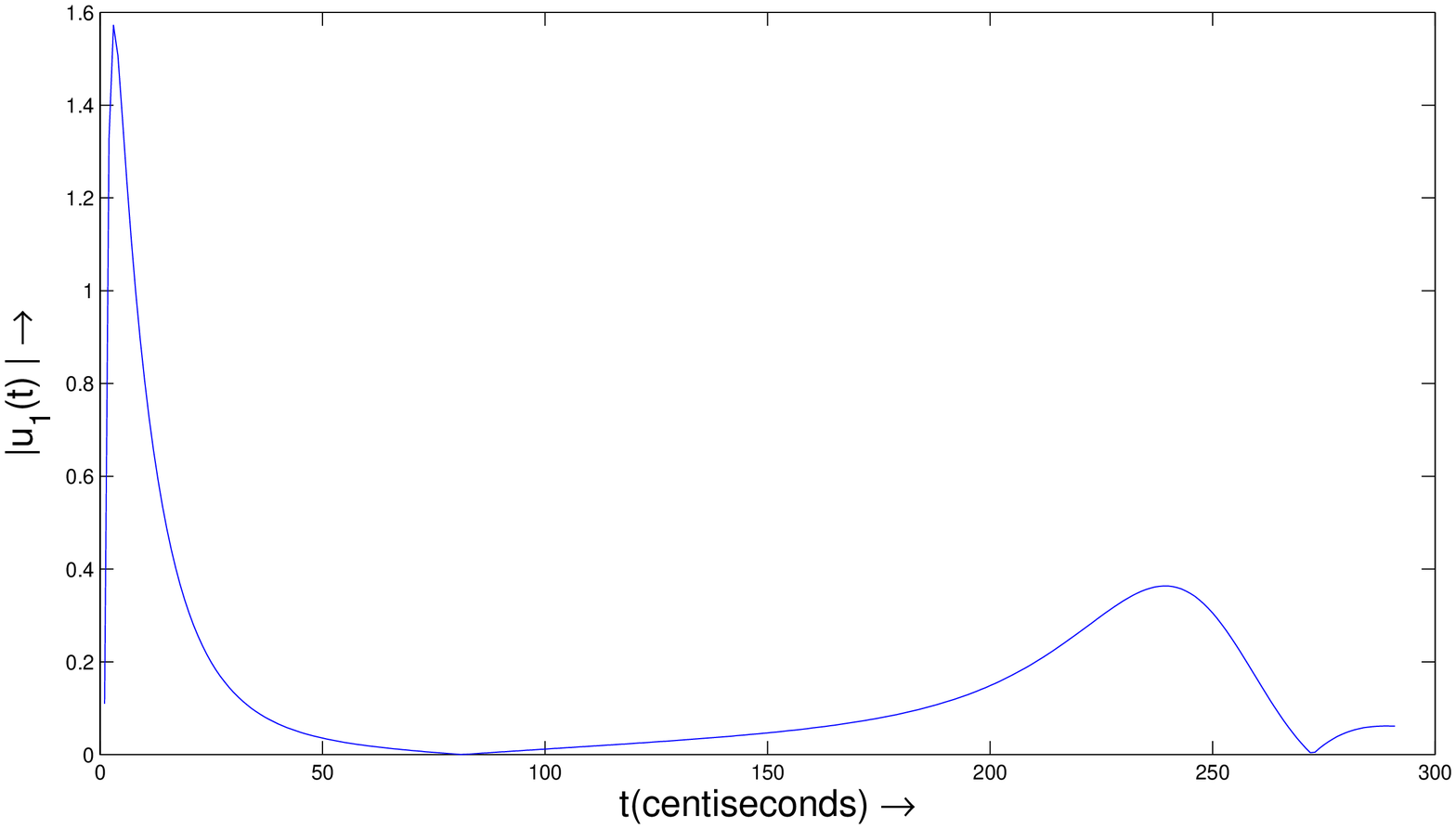}
   \end{subfigure}
   \begin{subfigure}[b]{0.22\textwidth}
  %\centering
  \includegraphics[height=2cm, width=\linewidth]  {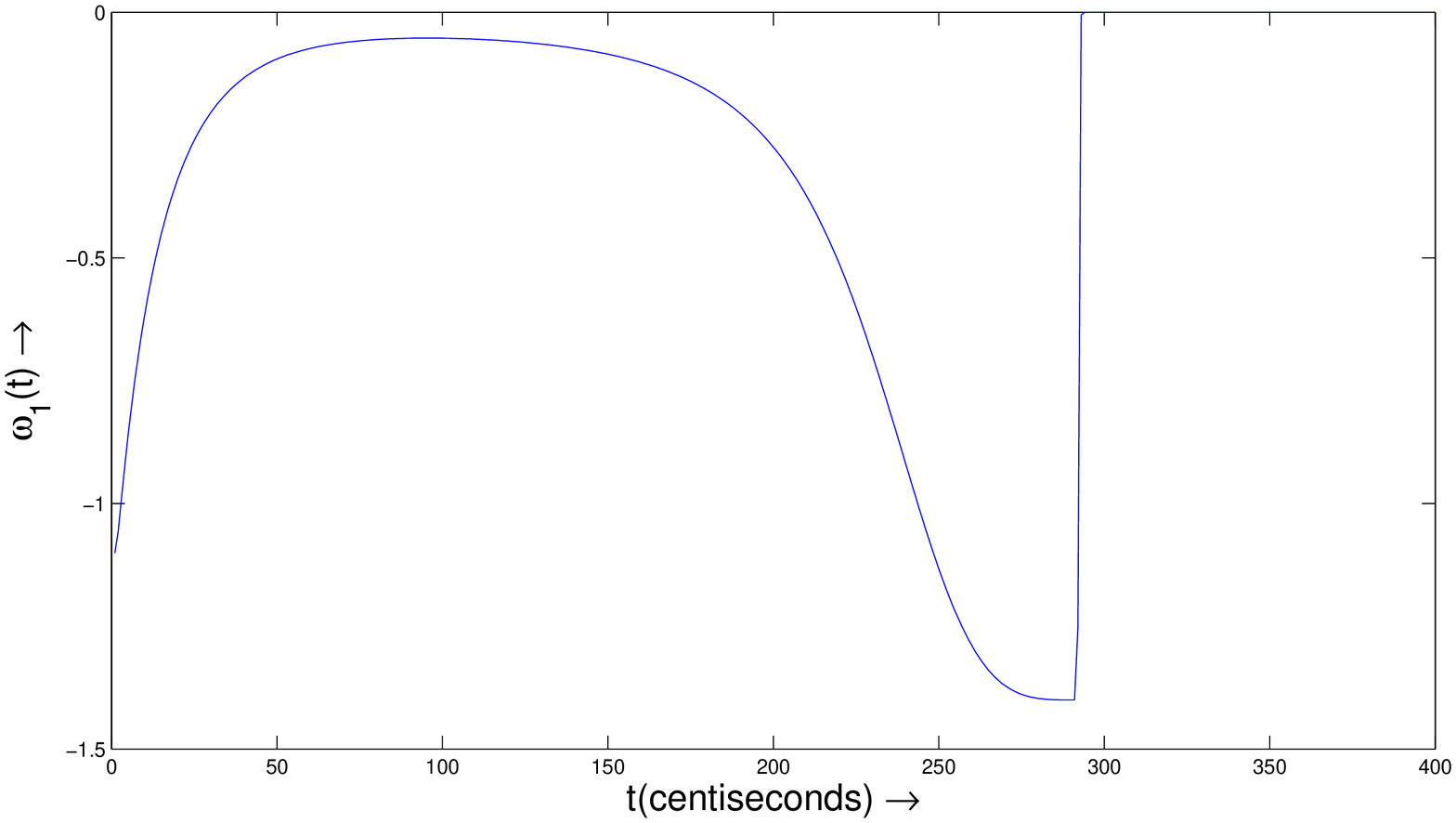}
   \end{subfigure}
 \caption{Tracking results for the first two coordinates for the initial condition $R_2(0)= \protect \begin{pmatrix} 0.4794 & 0.87758 \\ -0.87758 & 0.4794 \protect \end{pmatrix}$ in }
 \label{fig2ch3}
\end{figure}
For the second simulation we choose the following initial conditions:
\begin{equation*}
 R_1(0)= \begin{pmatrix} 1 & 0 \\ 0 & 1 \end{pmatrix}, \;\;\; \omega_1(0)= -1 \quad rad s^{-1}
 \end{equation*}
 \begin{equation*}
 	R_2(0)= \begin{pmatrix} 0 & 1 \\ -1 & 0 \end{pmatrix} , \;\;\; \omega_2(0) = 2 \quad rad s^{-1}
\end{equation*}
The reference trajectory is chosen to be:
\begin{equation*}
R_{2_{ref}}(t) = \begin{pmatrix} \cos(t) & -\sin(t) \\ \sin(t) & \cos(t) \end{pmatrix}, \;\; \omega_{2_{ref}}(t) =1 \quad rad s^{-1}, \;\; t \geq 0.
\end{equation*}
We consider $K_p= 3$, $F_d=- diag(1.5 \quad 2)$, $P = diag(2 \quad 1.5)$. Figure \ref{fig3ch3} shows the results. The last subfigure depicts the reference and the actual trajectory on a cylinder

 Now for the same initial conditions for the pendubot, we consider
\begin{equation*}
R_{2_{ref}}(t) = id, \; \; \omega_{2_{ref}}(0) =0 \quad rad s^{-1}, \; \forall t \geq 0
\end{equation*}
This is therefore, a stabilization problem. The results are shown in Figure \ref{fig32ch3}.
\begin{figure}[!h]
\centering
\begin{subfigure}[b]{0.22\textwidth}
  \includegraphics[height=2cm, width=\linewidth]{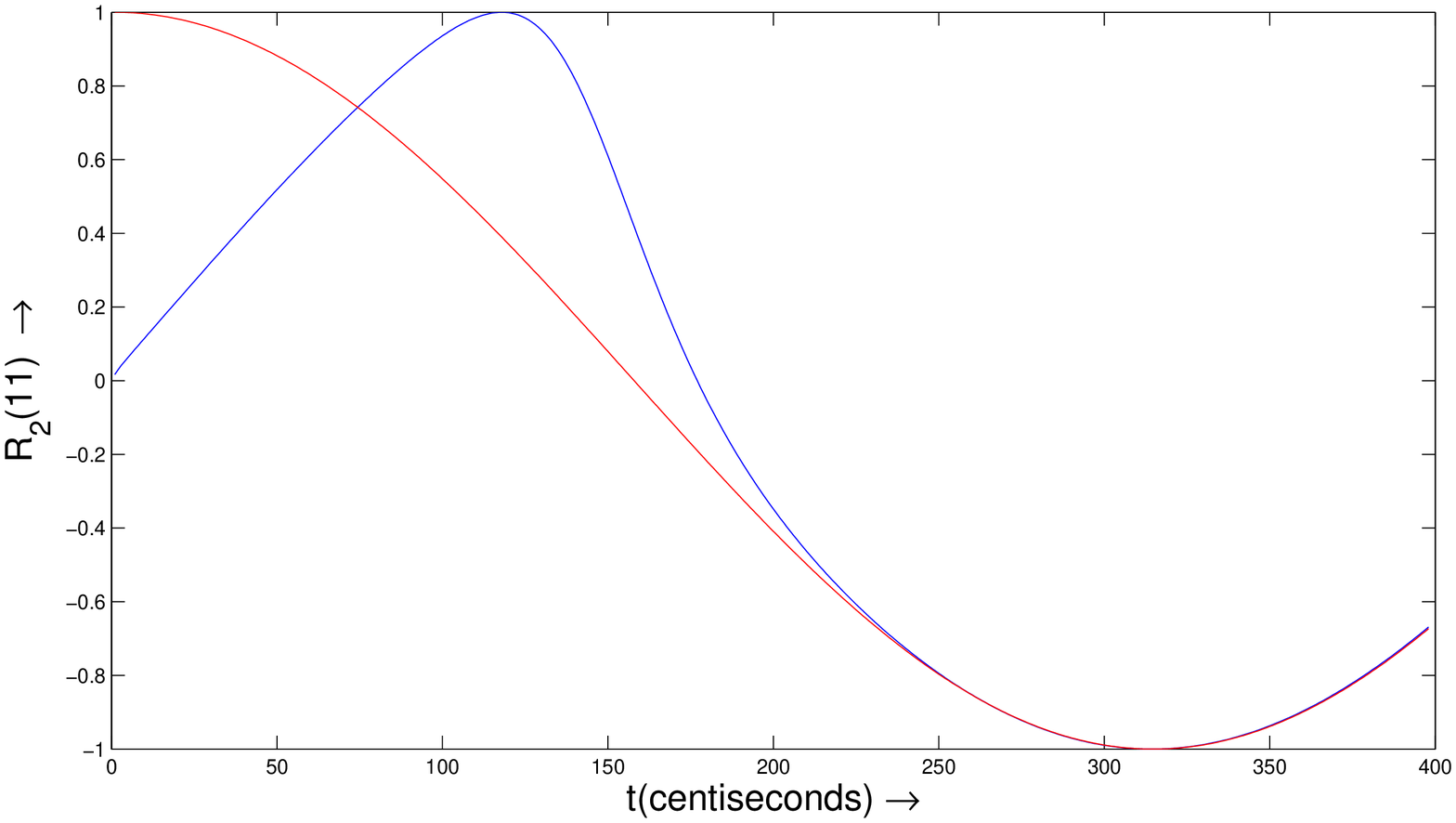}
  \end{subfigure}
\begin{subfigure}[b]{0.22\textwidth}
  \includegraphics[height=2cm, width=\linewidth]{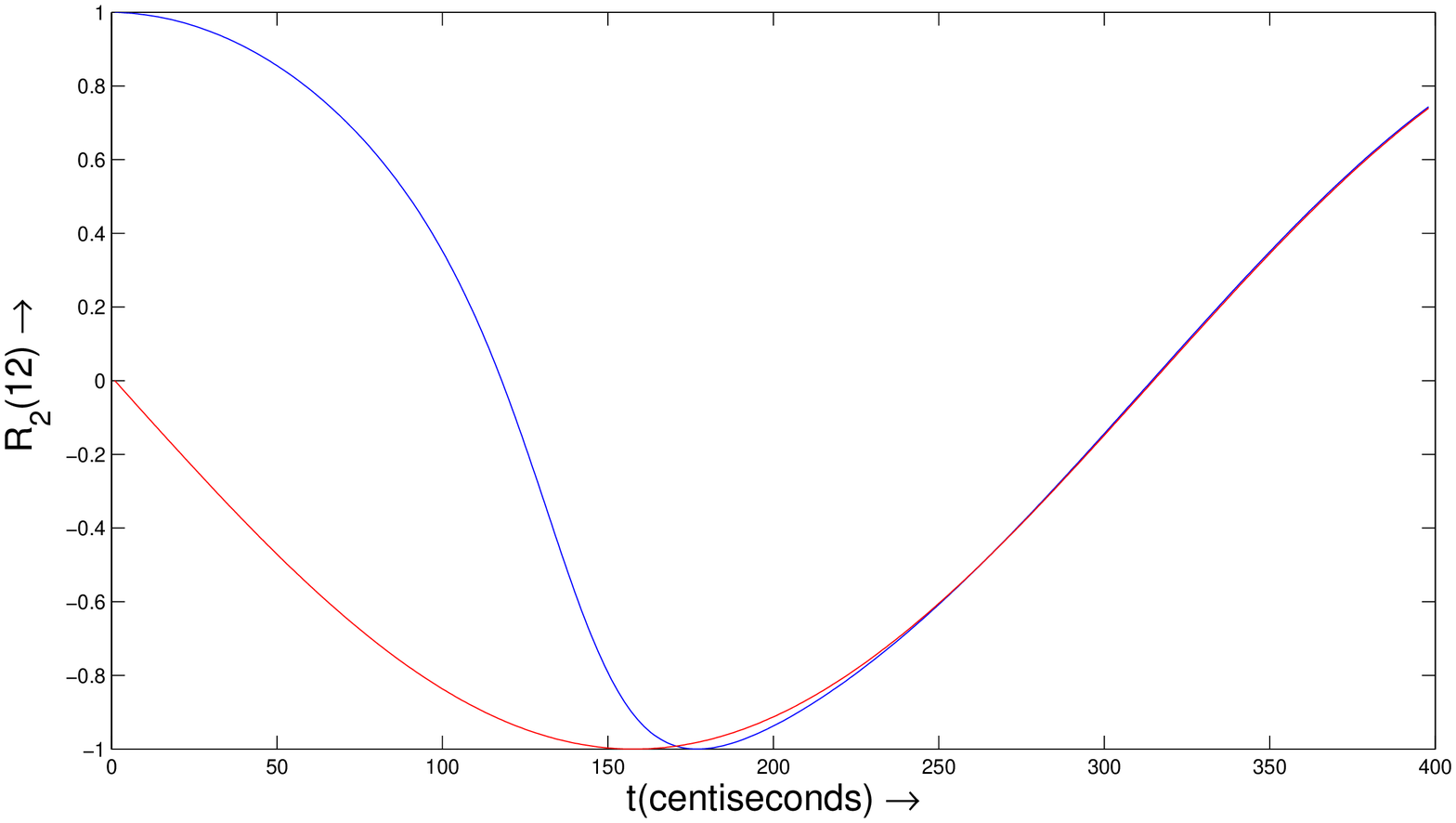}
\end{subfigure}
% \begin{subfigure}[b]{0.22\textwidth}
%   \includegraphics[height=2cm, width=\linewidth]{2link_fig23.eps}
% \end{subfigure}
  \begin{subfigure}[b]{0.22\textwidth}
   %\centering
  \includegraphics[height=2cm, width=\linewidth]  {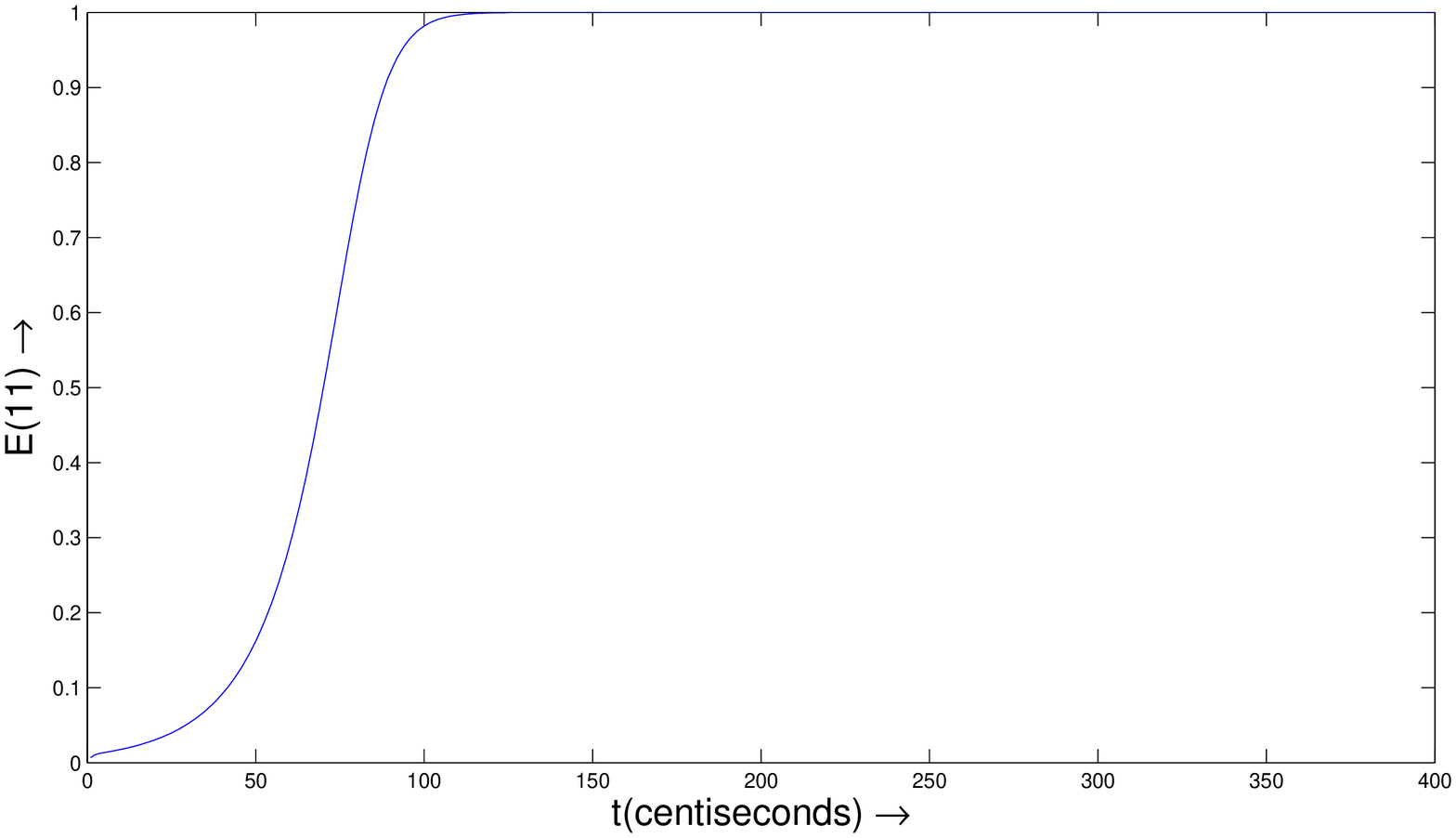}
  \end{subfigure}
  \begin{subfigure}[b]{0.22\textwidth}
   %\centering
  \includegraphics[height=2cm, width=\linewidth]  {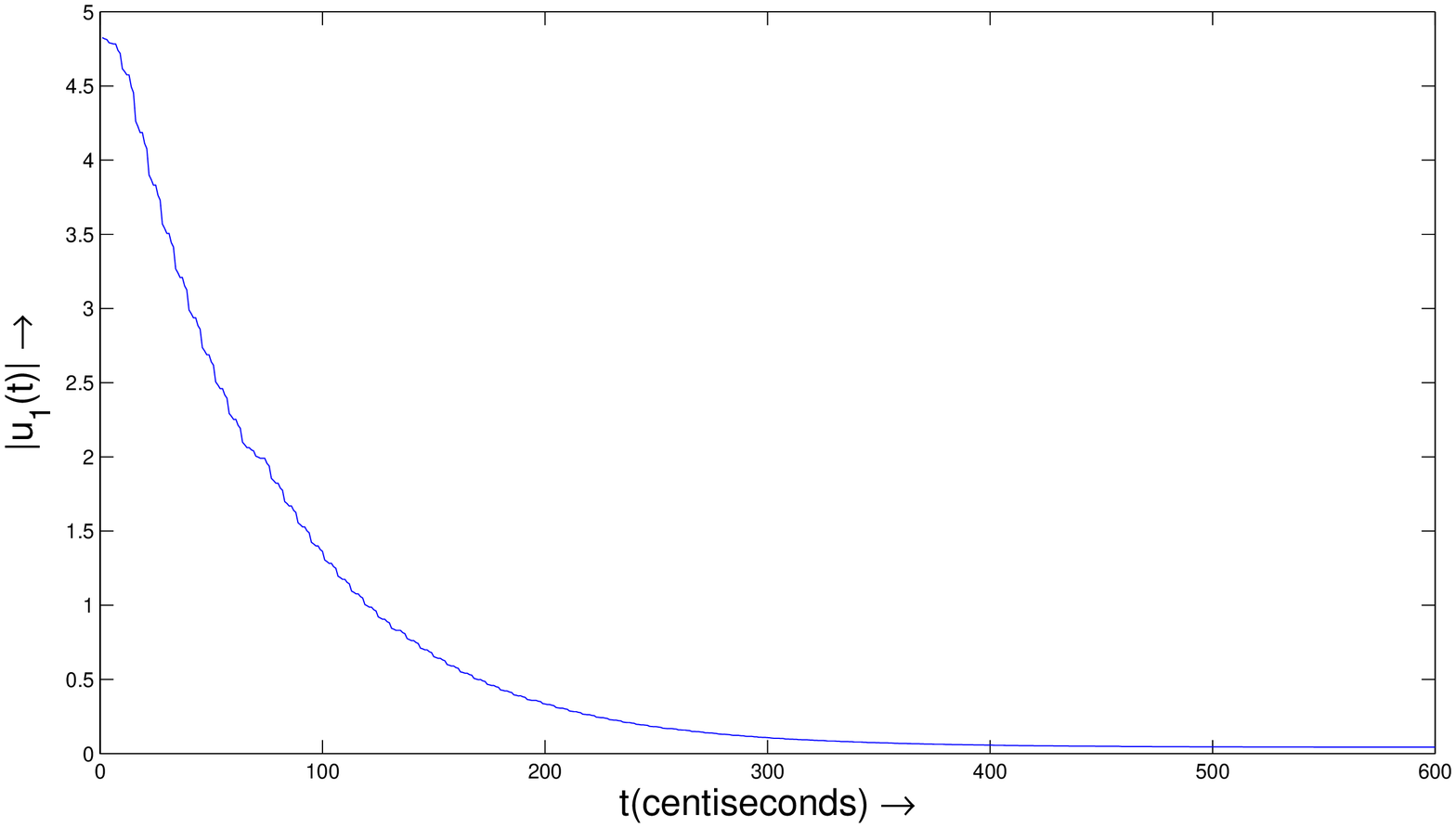}
  \end{subfigure}
  \begin{subfigure}[b]{0.22\textwidth}
   %\centering
  \includegraphics[height=2cm, width=\linewidth]  {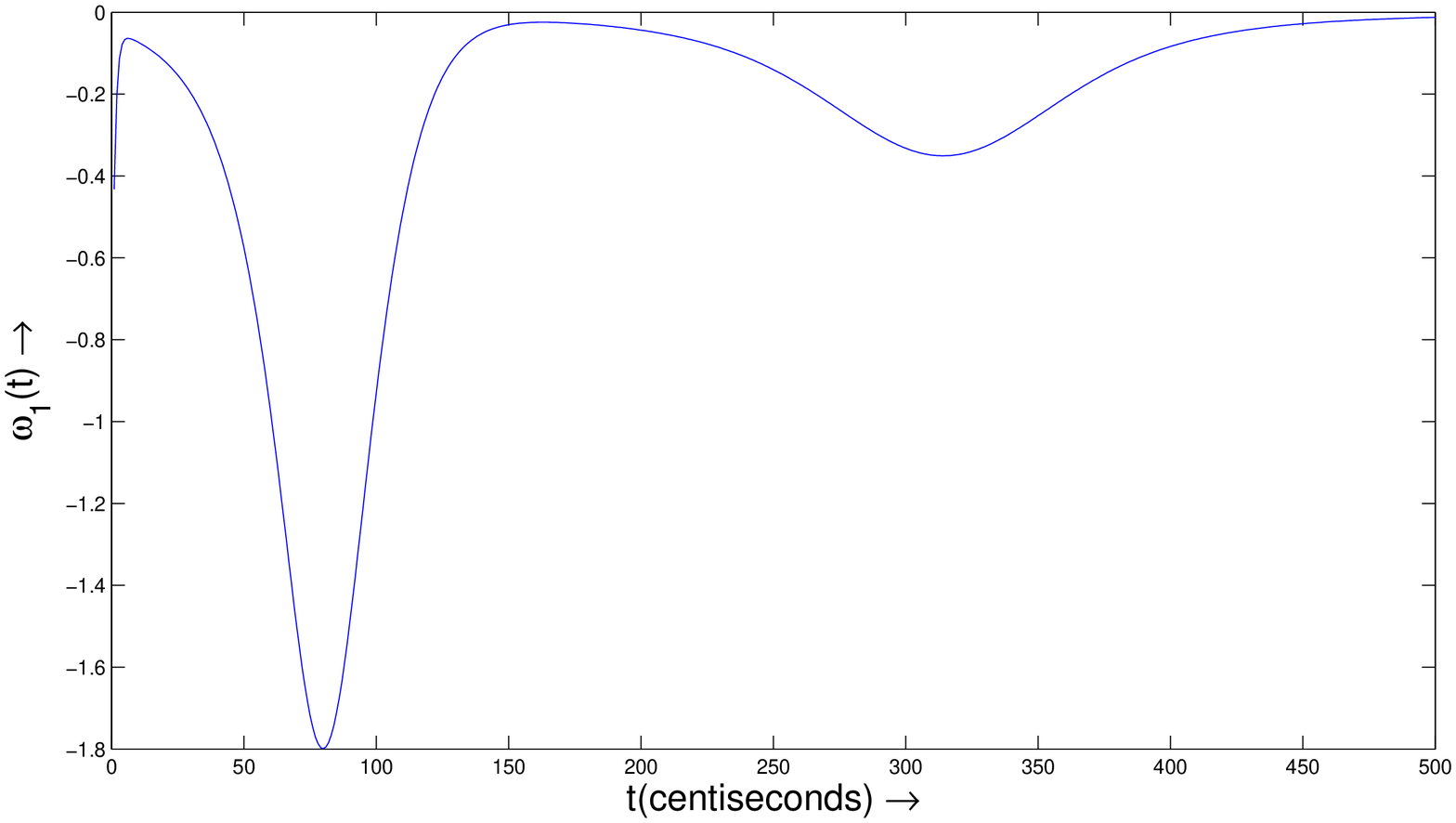}
  \end{subfigure}
  \begin{subfigure}[b]{0.2\textwidth}
   %\centering
  \includegraphics[height=2.7cm, width=1\linewidth]  {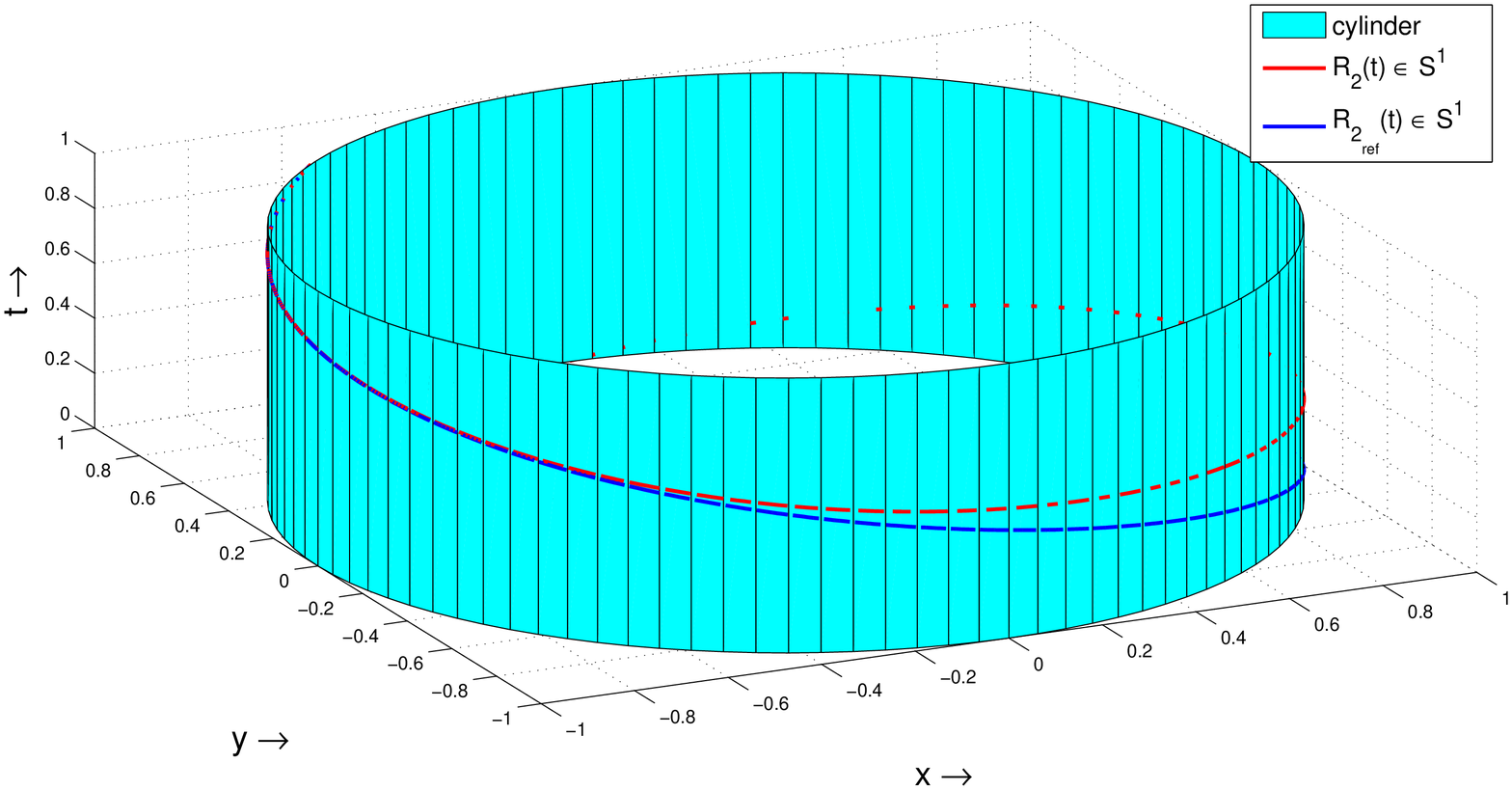}
  \end{subfigure}
 \caption{Tracking results for second set of initial conditions and, $R_{2_{ref}}(t) = \protect \begin{pmatrix} \cos(t) & -\sin(t) \\ \sin(t) & \cos(t)\protect \end{pmatrix}.$}
 \label{fig3ch3}
\end{figure}
\begin{figure}[!h]
\begin{subfigure}[b]{0.15\textwidth}
   %\centering
  \includegraphics[height=2cm, width=\linewidth]  {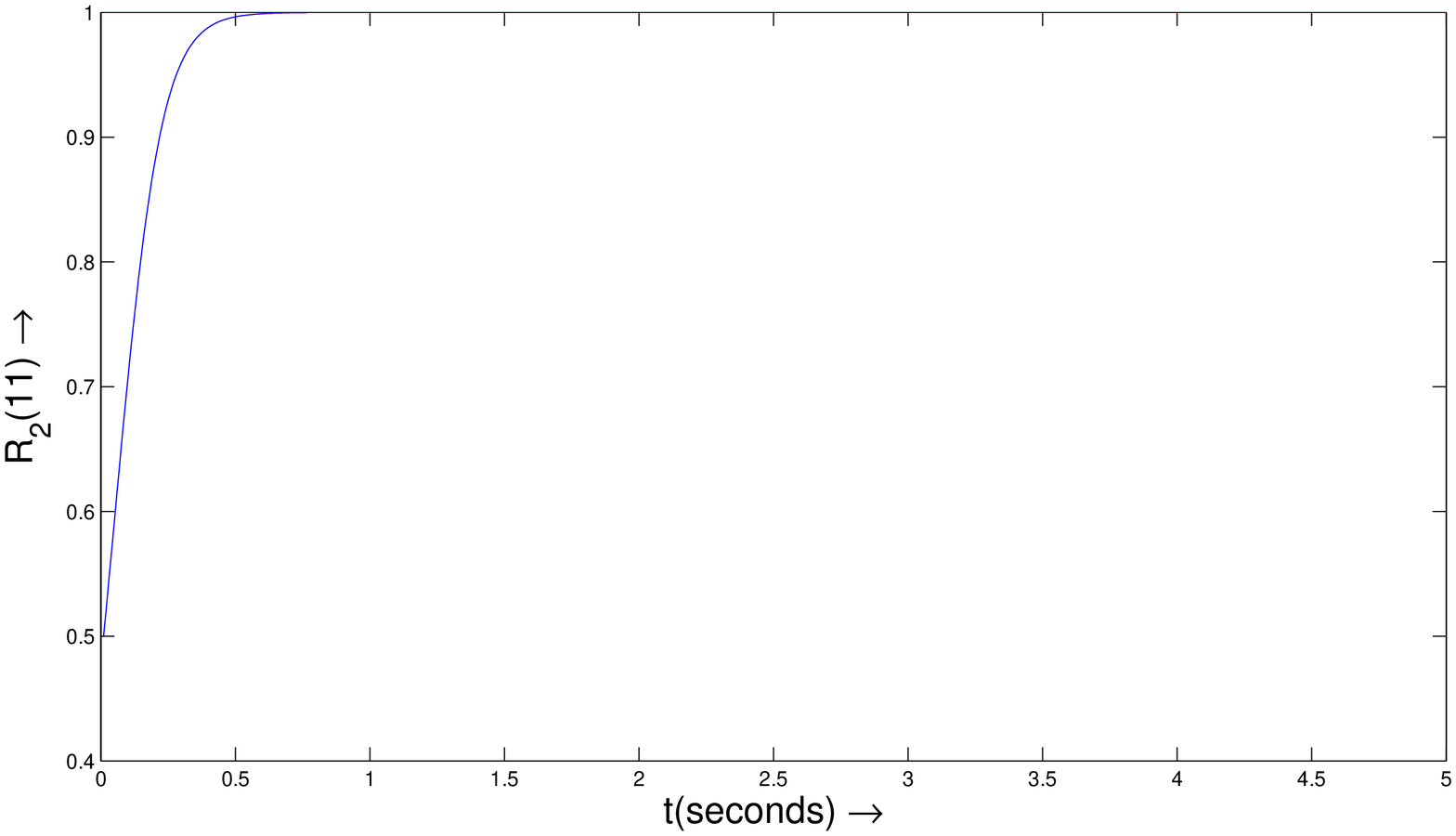}
  \end{subfigure}
  \begin{subfigure}[b]{0.15\textwidth}
   %\centering
  \includegraphics[height=2cm, width=\linewidth]  {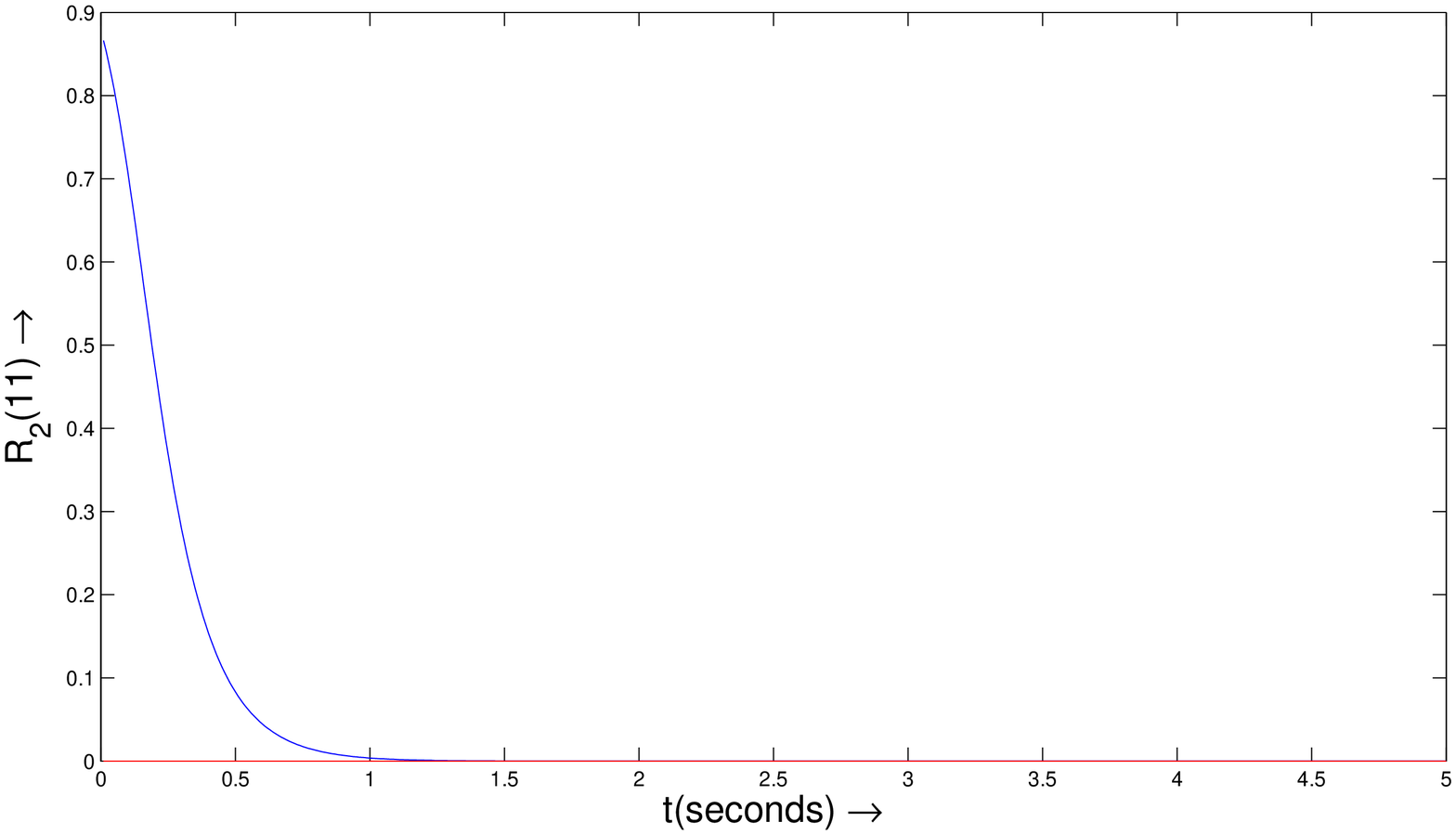}
  \end{subfigure}
  \begin{subfigure}[b]{0.15\textwidth}
   %\centering
  \includegraphics[height=2cm, width=\linewidth]  {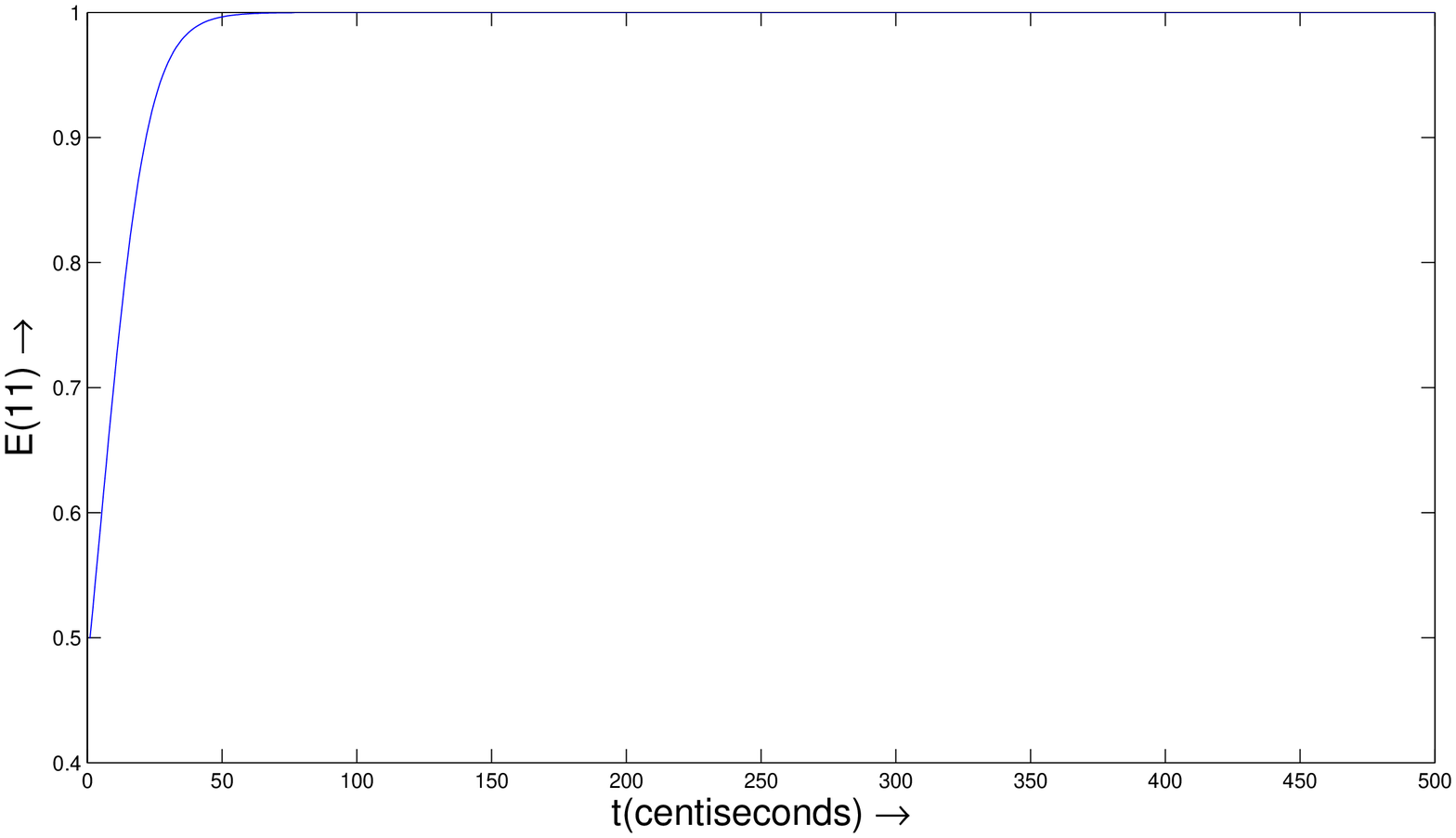}
  \end{subfigure}
  \begin{subfigure}[b]{0.15\textwidth}
   %\centering
  \includegraphics[height=2cm, width=\linewidth]  {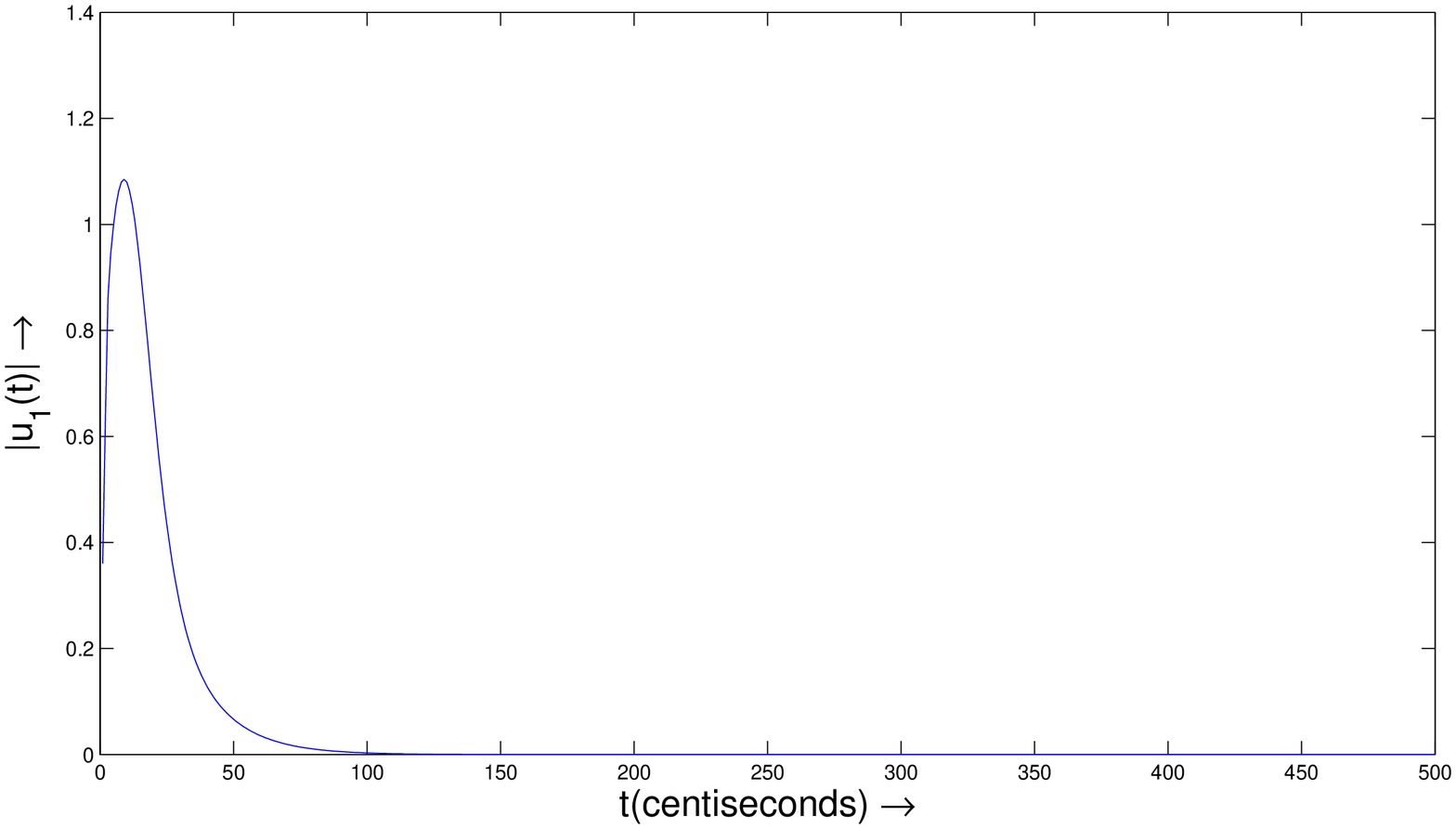}
  \end{subfigure}
  \begin{subfigure}[b]{0.15\textwidth}
   %\centering
  \includegraphics[height=2cm, width=\linewidth]  {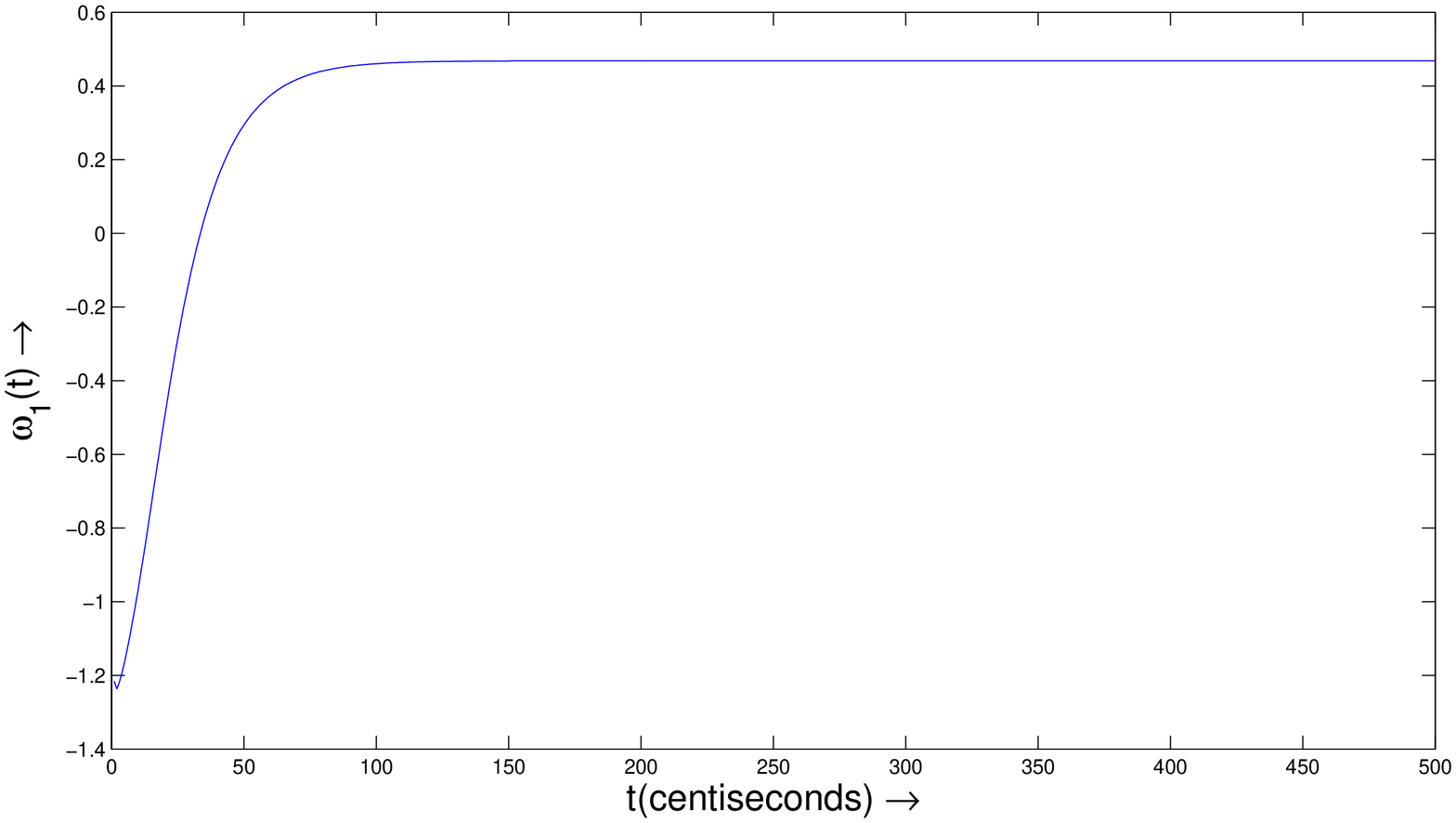}
  \end{subfigure}
 \caption{Tracking results for second set of initial conditions and, $R_{2_{ref}}(t) = id$}
 \label{fig32ch3}
\end{figure}

\begin{equation*}
R_1(0)= \begin{pmatrix} 1 & 0 \\ 0 & 1 \end{pmatrix}, \; \; \omega_1(0) =-1 \quad rad s^{-1}
\end{equation*}
\begin{equation*}
R_2(0)= \begin{pmatrix} 0.4794 & 0.87758 \\ -0.87758 & 0.4794 \end{pmatrix} , \;\; \omega_2(0) =2 \quad rad s^{-1}
\end{equation*}
 The reference trajectory is chosen to be:
 \begin{equation*}
 R_{2_{ref}}(t) = \begin{pmatrix} -\cos(t)& \sin(t) \\ - \sin(t) &  -\cos(t) \end{pmatrix}, \;\; \omega_{2_{ref}}(t) =1 \quad rad s^{-1}, \; t \geq 0.
 \end{equation*}
We consider $K_p= 2.5$, $F_d=- diag(1.5 \quad 2.2)$, $P = diag(2 \quad 1.5)$. The results are shown in Figure ~\ref{fig4ch3}.
\begin{figure}[!h]
\centering
\begin{subfigure}[b]{0.22\textwidth}
  \includegraphics[height=2cm, width=\linewidth]{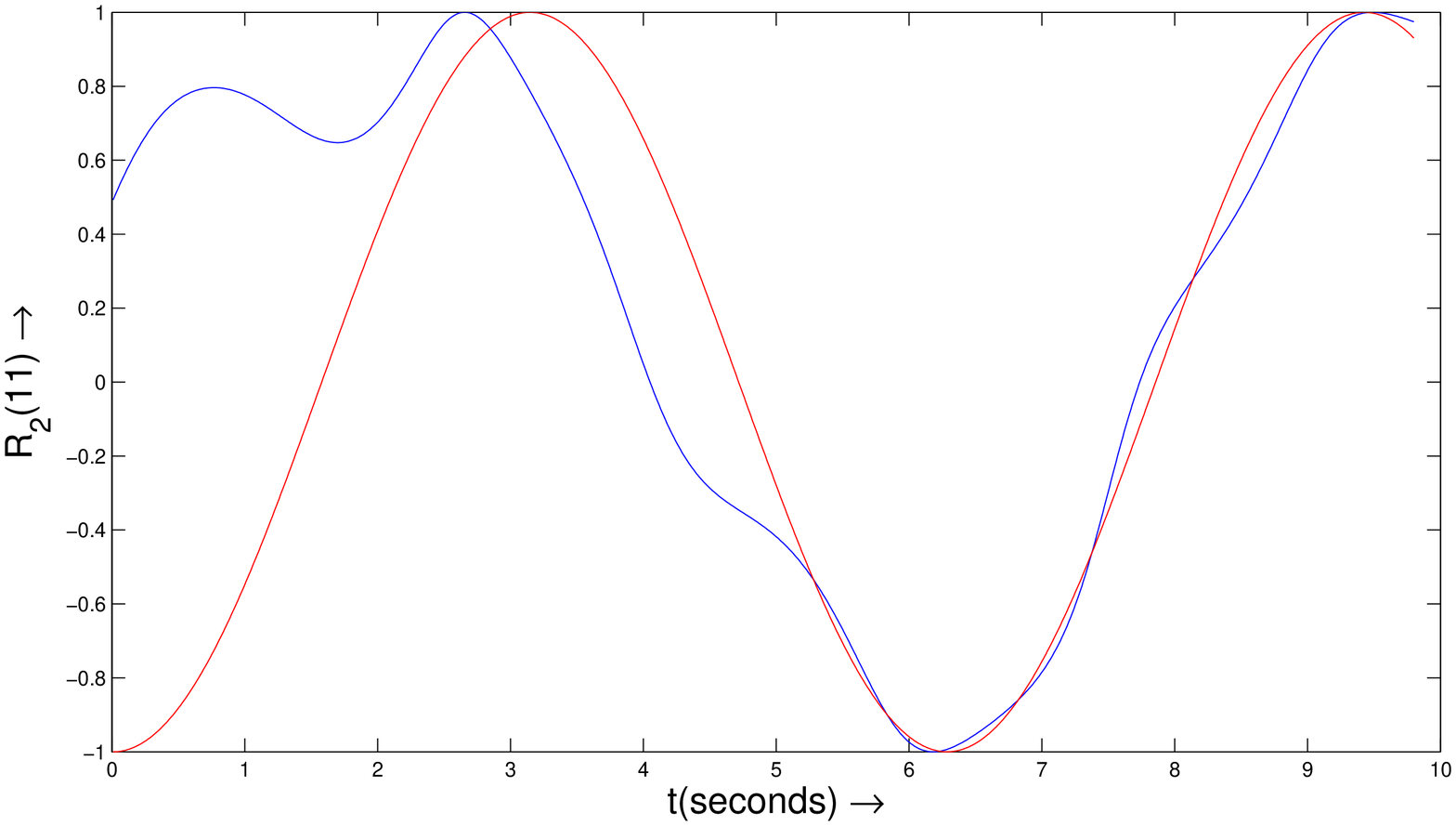}
  \end{subfigure}
\begin{subfigure}[b]{0.22\textwidth}
  \includegraphics[height=2cm, width=\linewidth]{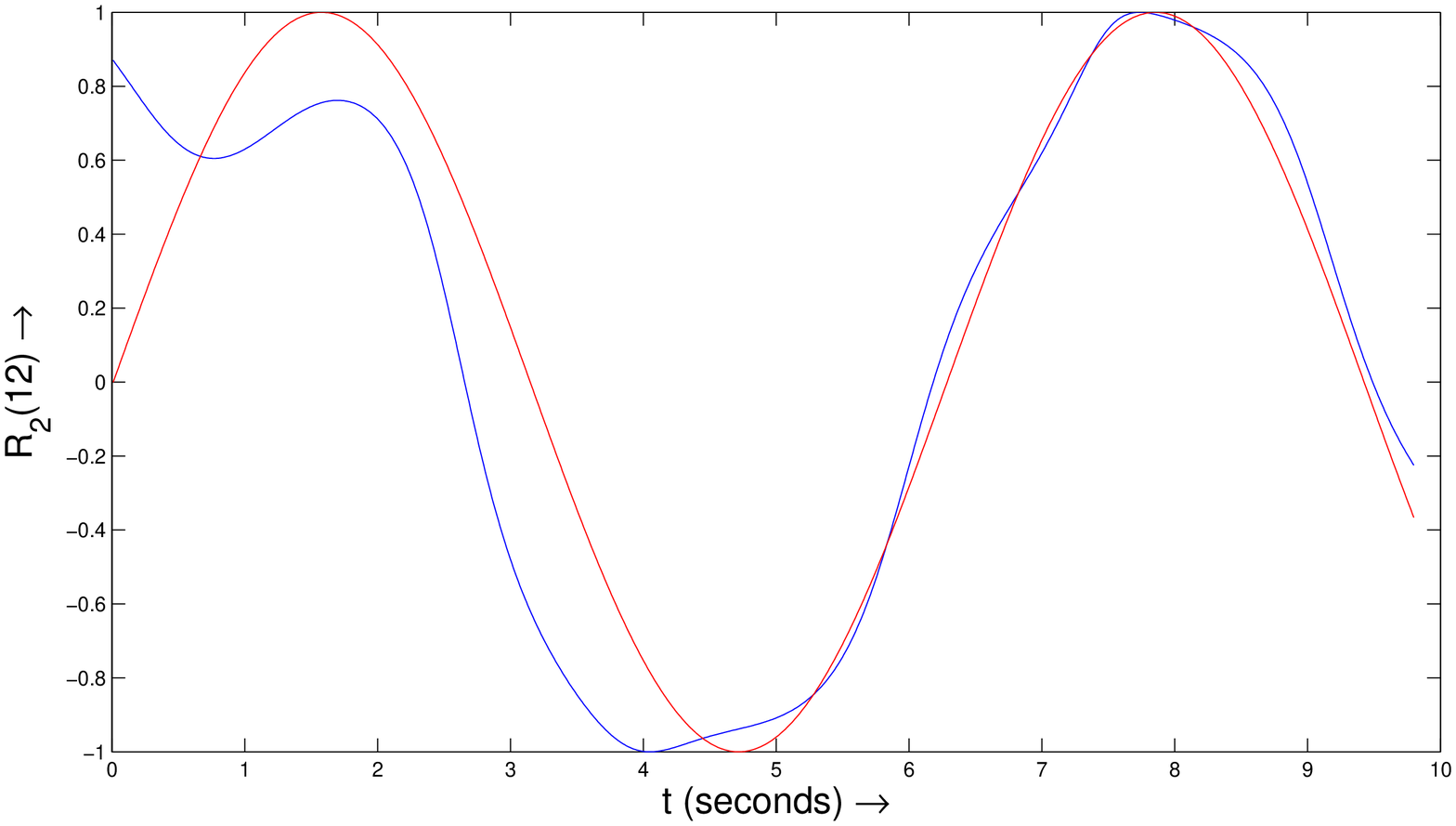}
\end{subfigure}
\begin{subfigure}[b]{0.22\textwidth}
  \includegraphics[height=2cm, width=\linewidth]{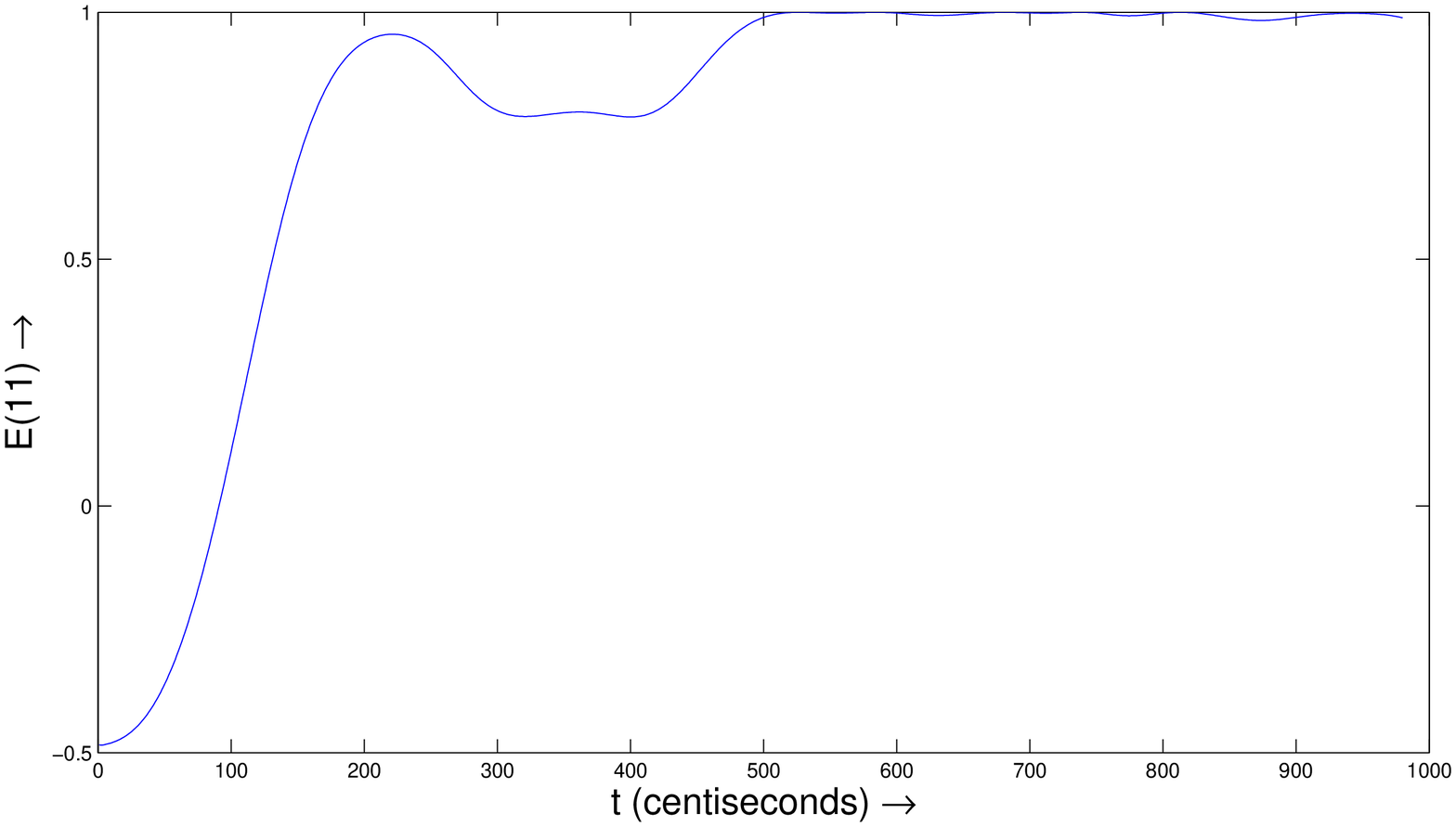}
\end{subfigure}
\begin{subfigure}[b]{0.22\textwidth}
   %\centering
  \includegraphics[height=2cm, width=\linewidth]  {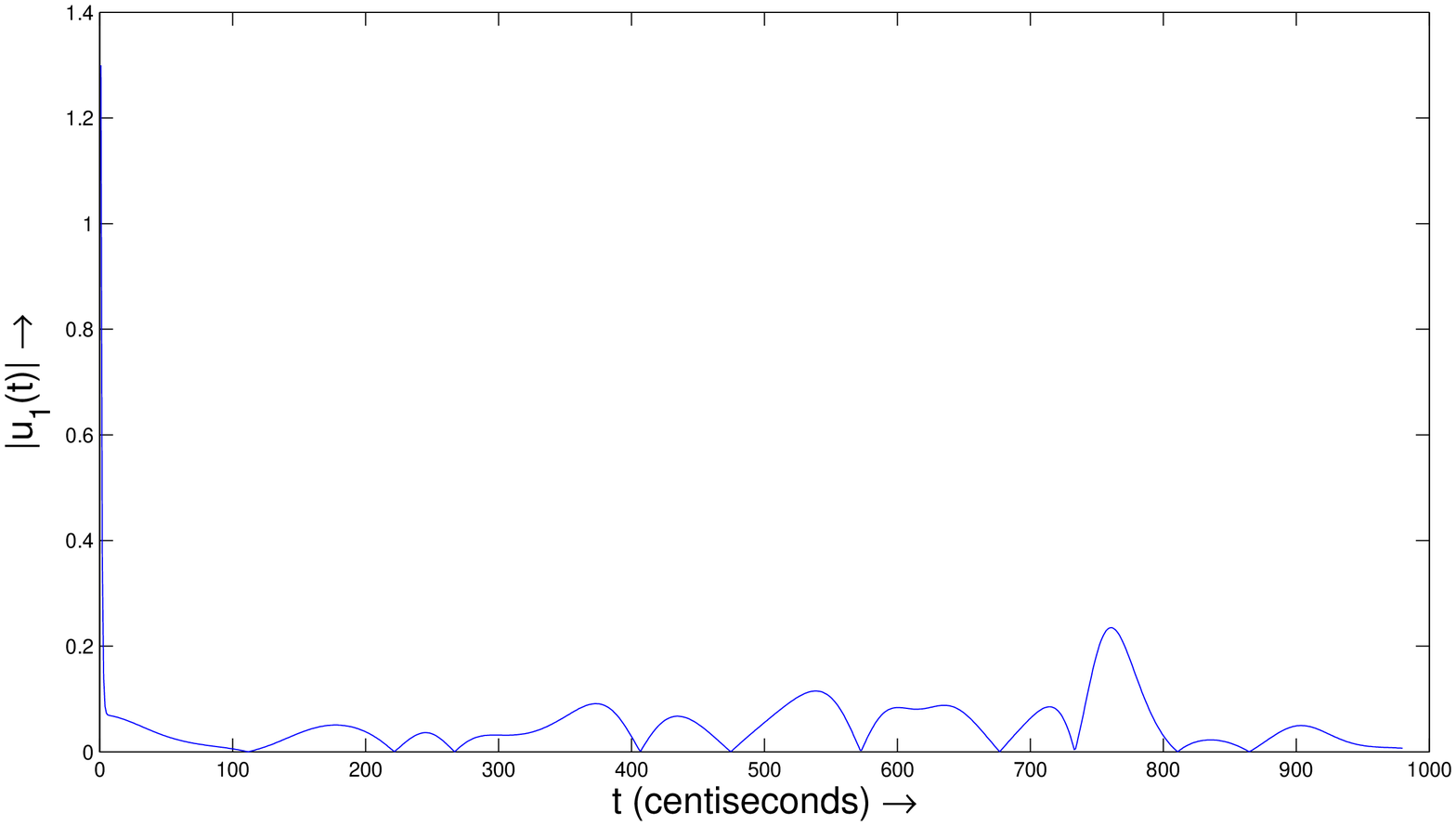}
  \end{subfigure}
  \begin{subfigure}[b]{0.22\textwidth}
   %\centering
  \includegraphics[height=2cm, width=\linewidth]  {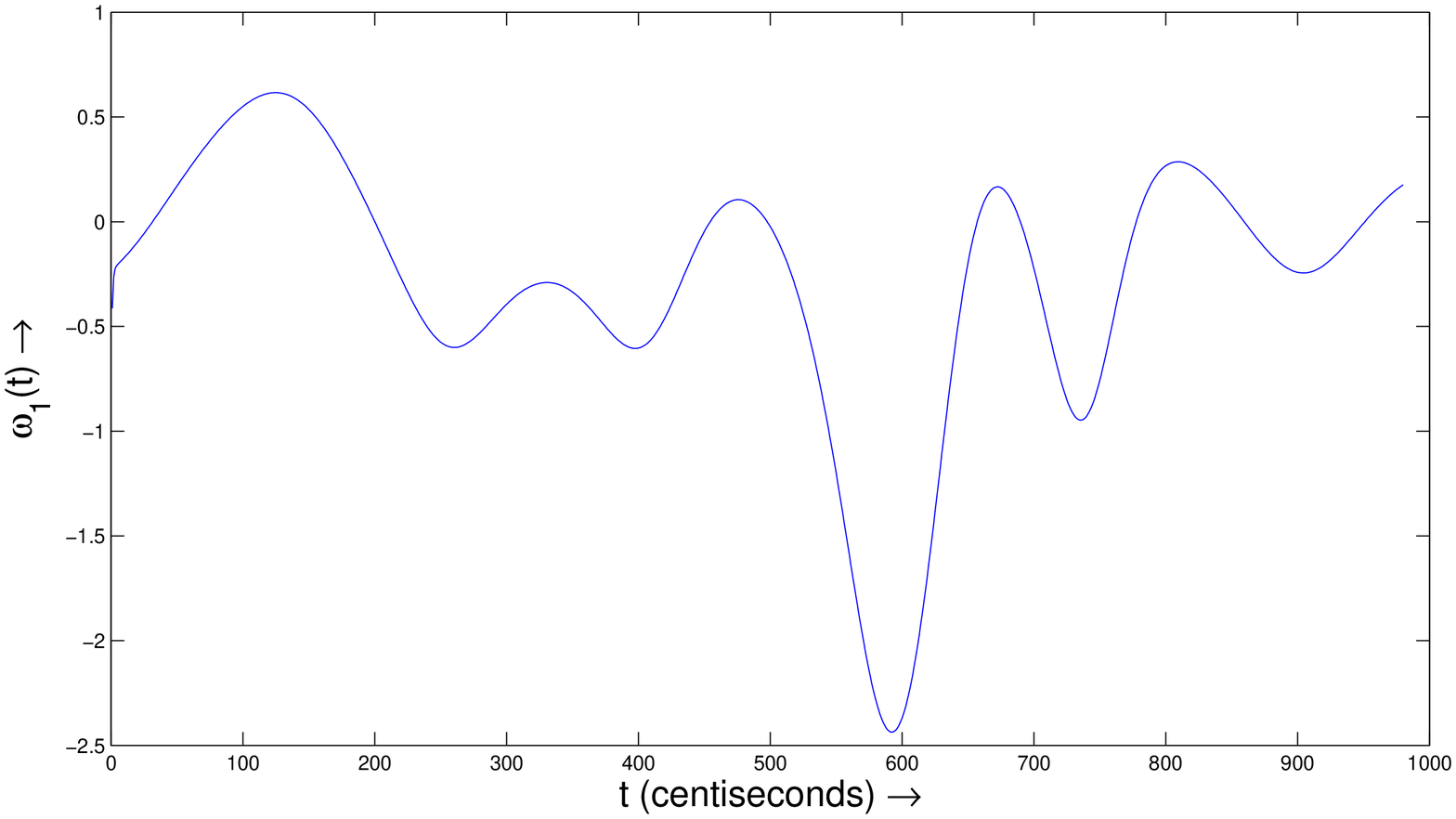}
  \end{subfigure}
 \caption{Tracking results for the last set of initial conditions}
 \label{fig4ch3}
\end{figure}
Next, we consider another pendubot with the following parameters: $m_1=0.1 kg$, $m_2=0.4 kg$, $l_1= 0.25 \quad m$ and $l_2= 0.5 \quad m$. The second set of initial conditions is considered along with:
\begin{equation*}
R_{2_{ref}}(t) = \begin{pmatrix} \cos(t) & -\sin(t) \\ \sin(t) & \cos(t) \end{pmatrix}, \;\;\omega_{2_{ref}}(t) =1 \quad rad s^{-1} ,\; t \geq 0.
\end{equation*}
Simulations are performed with $K_p= 5$, $F_d= -diag(1.5 \quad 2.5)$, $P= diag(1.5 \quad 1.3)$ and results are plotted in Figure \ref{fig5}.
\begin{figure}[!h]
\centering
\begin{subfigure}[b]{0.22\textwidth}
  \includegraphics[height=2cm, width=\linewidth]{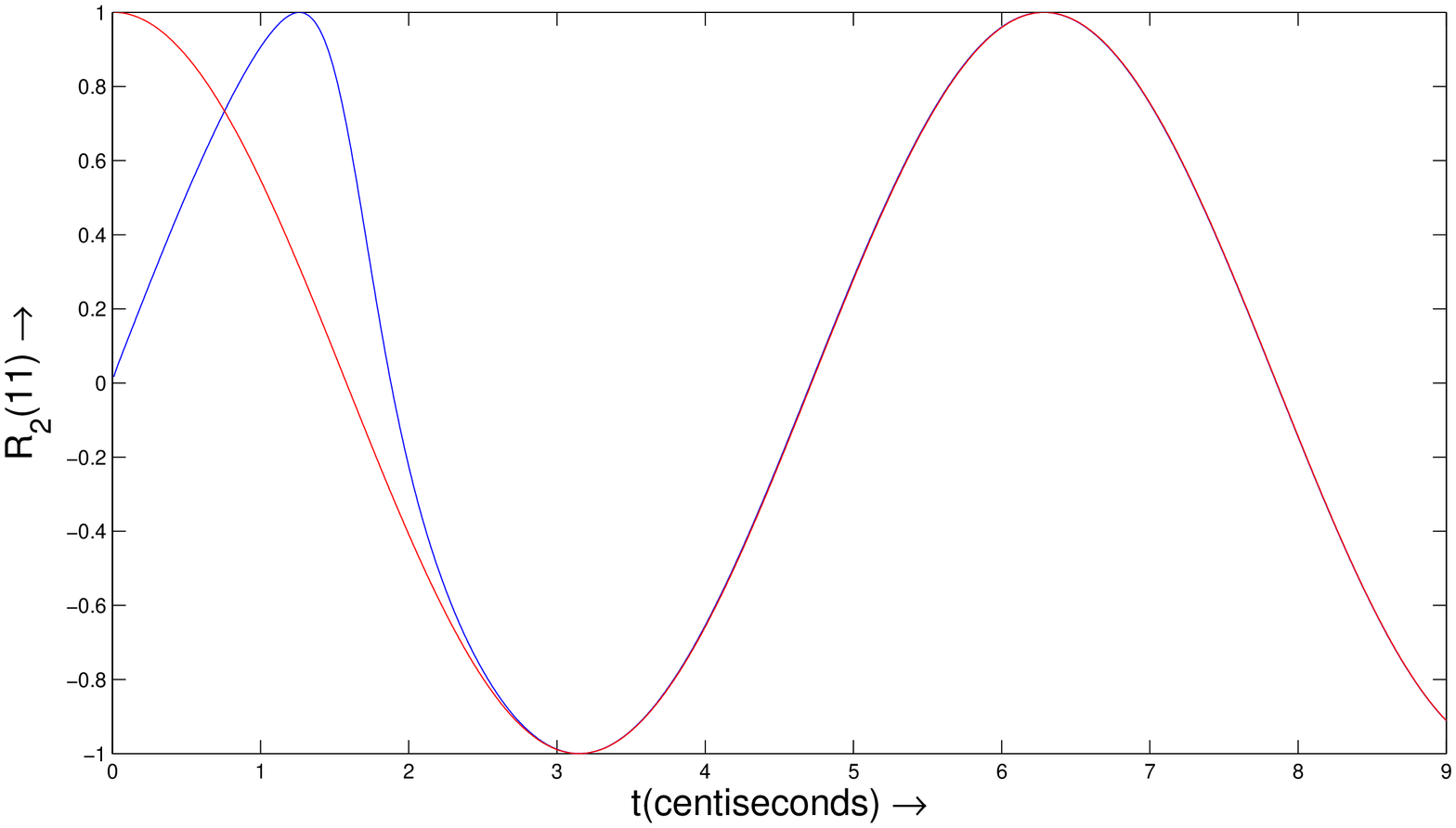}
  \end{subfigure}
\begin{subfigure}[b]{0.22\textwidth}
  \includegraphics[height=2cm, width=\linewidth]{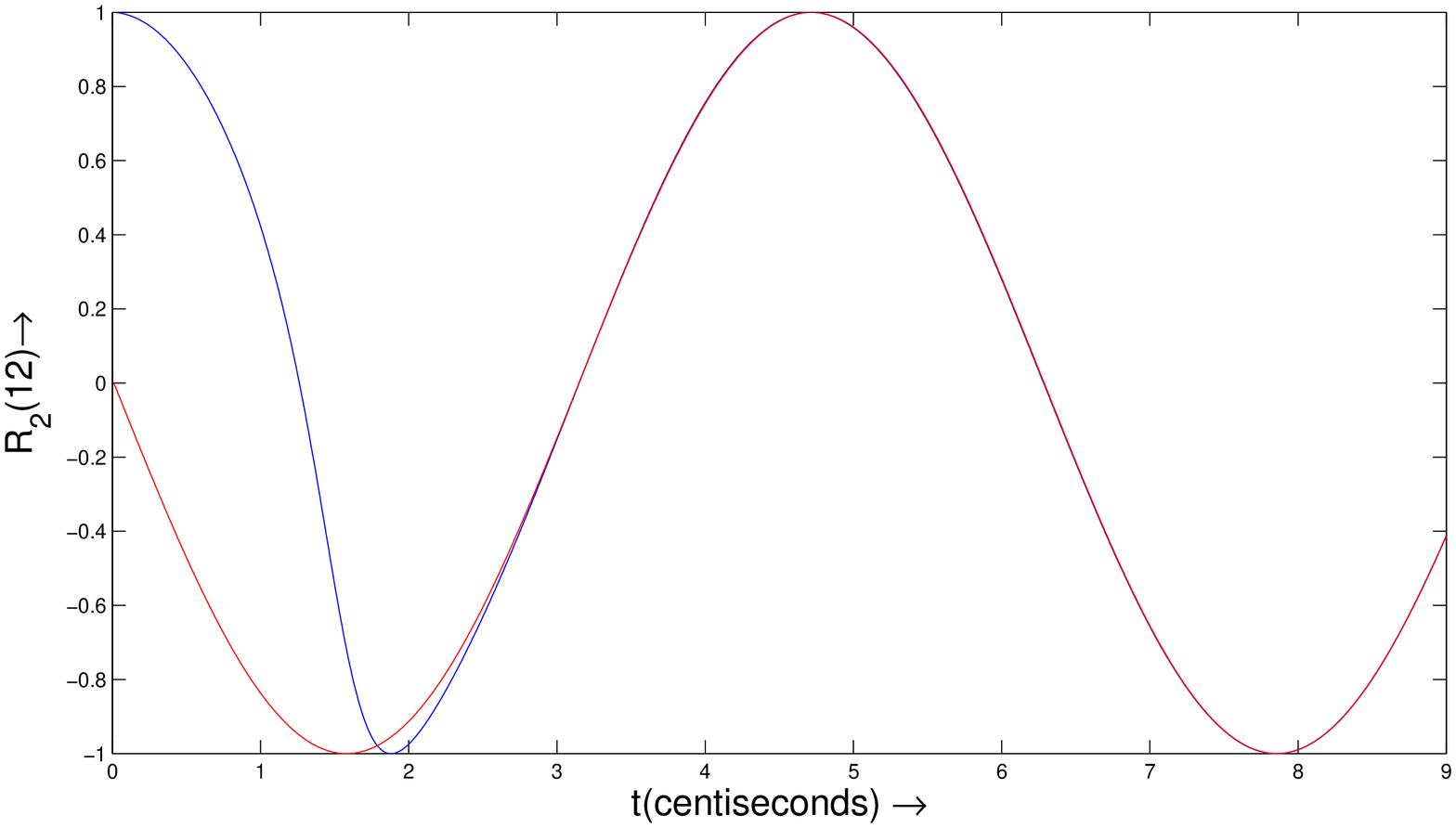}
\end{subfigure}
\begin{subfigure}[b]{0.22\textwidth}
  \includegraphics[height=2cm, width=\linewidth]{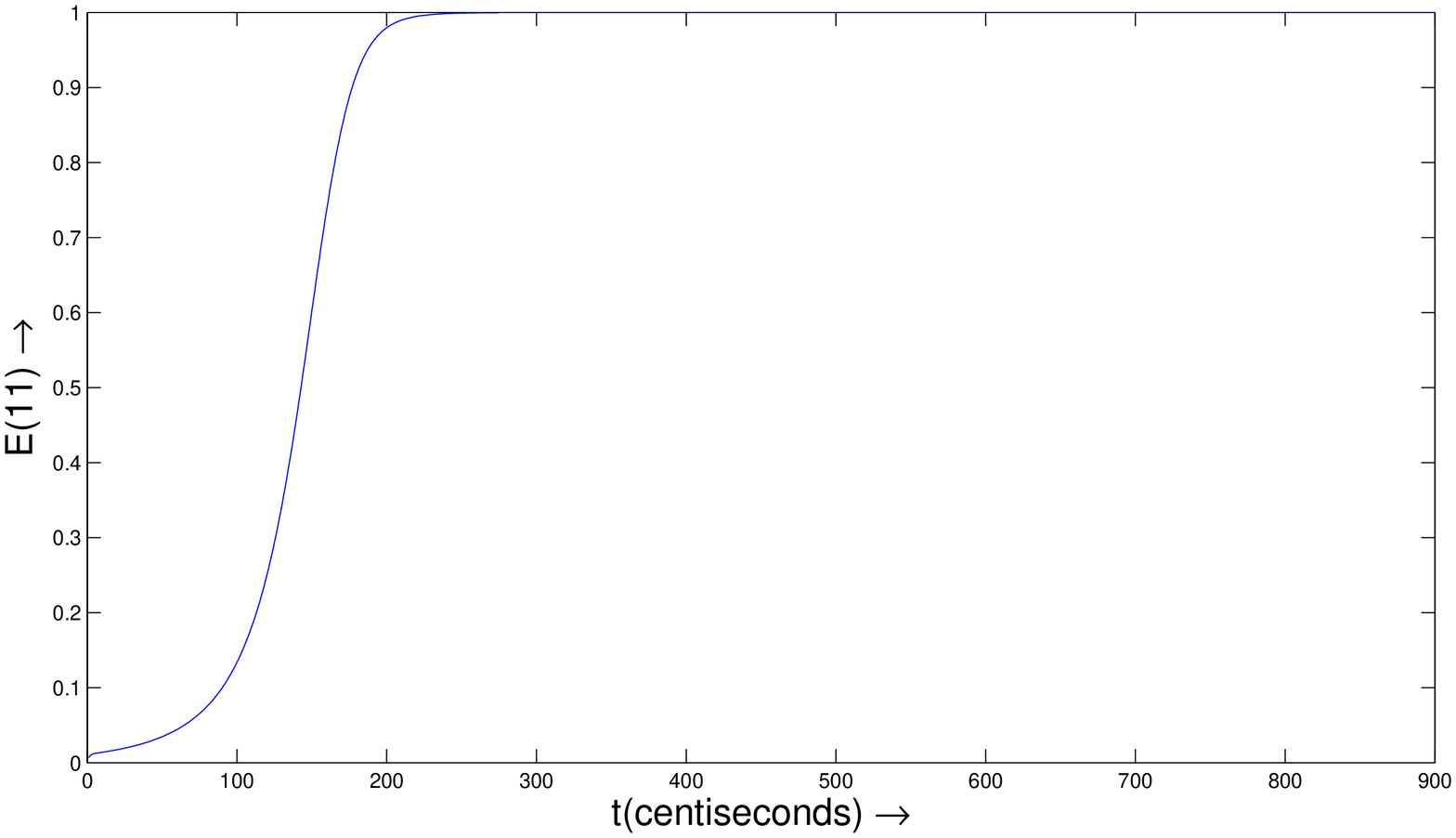}
\end{subfigure}
\begin{subfigure}[b]{0.22\textwidth}
   %\centering
  \includegraphics[height=2cm, width=\linewidth]  {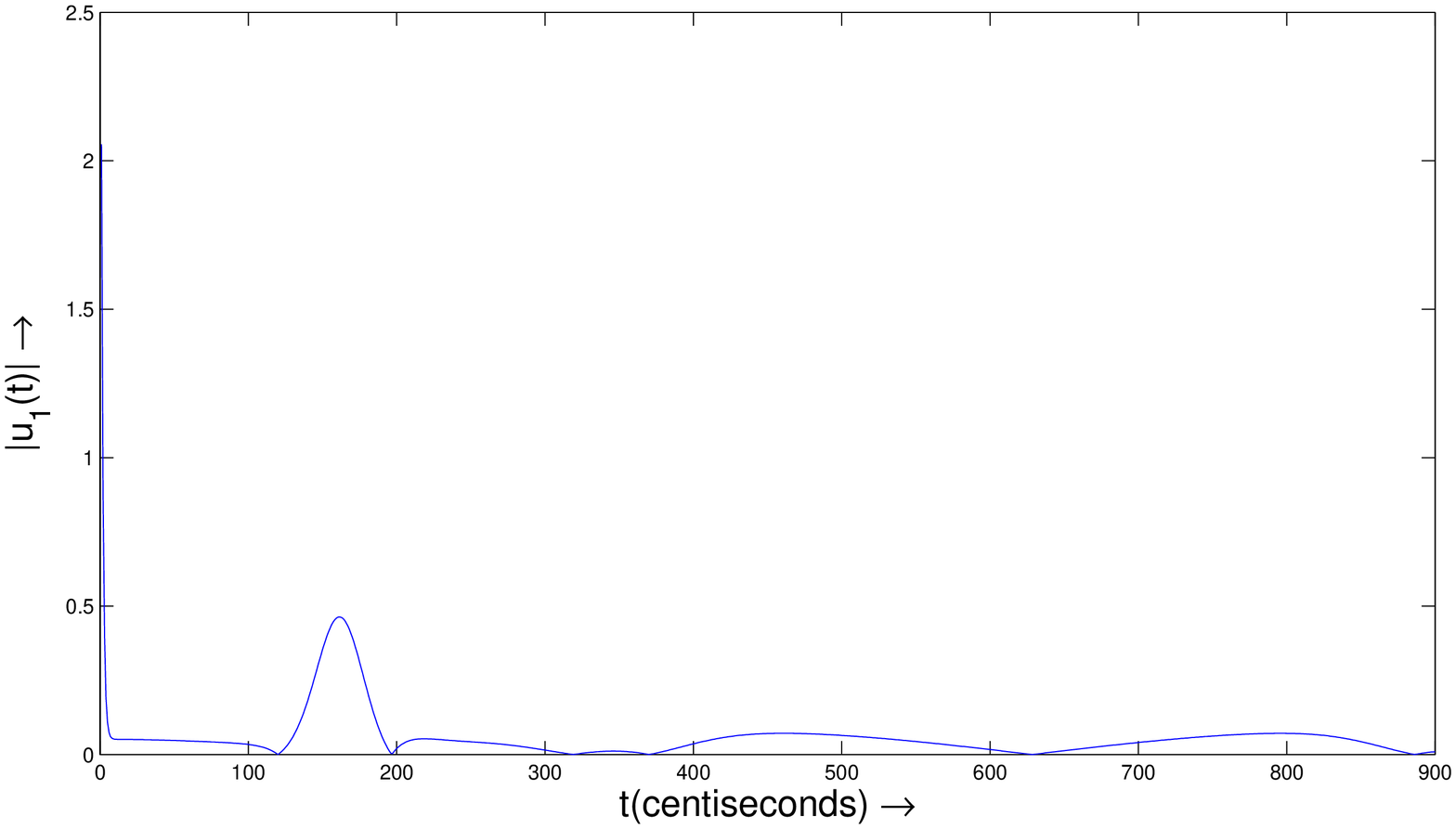}
  \end{subfigure}
  \begin{subfigure}[b]{0.22\textwidth}
   %\centering
  \includegraphics[height=2cm, width=\linewidth]  {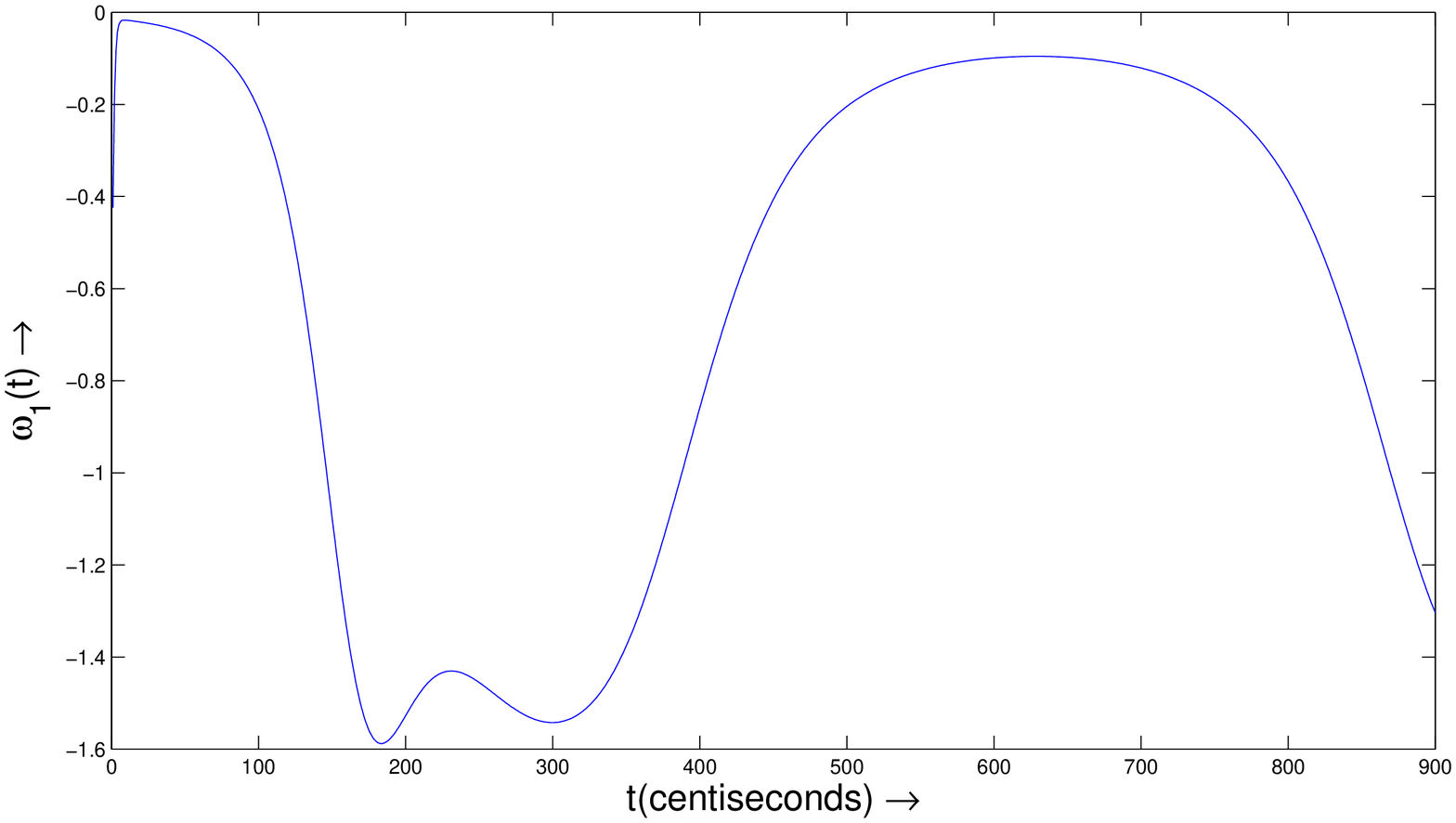}
  \end{subfigure}
 \caption{Tracking results for the second pendubot}
 \label{fig5}
\end{figure}

\section{Conclusions}
By transforming the dynamical equation governing the unactuated coordinate into an SMS through a suitable transformation of the
input, and then adopting standard techniques for tracking control, we synthesize a tracking control  law for the unactuated variable.
A few features of the control law, the simulations for an $n$-link manipulator and drawing conclusions based on the
performance, constitute some ongoing work.

\appendix
\subsection{Kinetic energy}\label{app1}
In what follows, we derive the expression for kinetic energy of the pendubot in \eqref{ke1}.
\begin{align}\label{xbidot}
 \dot{x}_{B_1}  &= \dot{R}_1 X_{B_1} \\ \nonumber
  \dot{x}_{B_2} &= \dot{R}_1 L_1 +( \dot{R}_1 R_2 + {R}_1\dot{R}_2) X_{B_2}
\end{align}
Therefore,
\begin{align}\label{normxb1}
 ||\dot{x}_{B_1}||^2 &= \langle \dot{R}_1 X_{B_1}, \dot{R}_1 X_{B_1} \rangle = \langle \hat{\omega}_1 X_{B_1}, \hat{\omega}_1 X_{B_1} \rangle \\ \nonumber
 &= (\hat{\omega}_1 X_{B_1})^T (\hat{\omega}_1 X_{B_1})= \Omega^T (\hat{\omega}_1 X_{B_1} \times  X_{B_1}).
\end{align}
where $\Omega= \begin{pmatrix}
                 0 & 0 & \omega
               \end{pmatrix}^T$. From \eqref{normxb1}, the first term in \eqref{ke} is
\begin{equation}\label{t1ke}
  \int_{B_1} ||\dot{x}_{B_1}||^2 \rho_1 \mathrm{d}V_1= \int_{B_1} \Omega^T (\hat{\omega}_1 X_{B_1} \times  X_{B_1}) \rho_1 \mathrm{d}V_1 = \mathbb{I}_1 \omega^2_1.
\end{equation}
From \eqref{xbidot},
\begin{align}\label{normxb2}
||\dot{x}_{B_2}||^2 &= \langle \dot{R}_1 L_1,\dot{R}_1 L_1 \rangle + 2 \langle  \dot{R}_1 L_1,(\dot{R}_1 R_2+ {R}_1\dot{R}_2)X_{B_2} \rangle \\ \nonumber
&+ \langle(\dot{R}_1 R_2+ {R}_1\dot{R}_2)X_{B_2},(\dot{R}_1 R_2+ {R}_1\dot{R}_2)X_{B_2} \rangle\\ \nonumber
&= \langle \hat{\omega}_1L_1, \hat{\omega}_1 L_1 \rangle + 2 \langle \hat{\omega}_1 L_1 ,(\hat{\omega}_1 R_2 + R_2 \hat{\omega}_2 )X_{B_2}\rangle\\ \nonumber
&+ \langle (\hat{\omega}_1R_2+ R_2 \hat{\omega}_2)X_{B_2}, (\hat{\omega}_1R_2+ R_2 \hat{\omega}_2)X_{B_2} \rangle \\\nonumber
&= \langle \hat{\omega}_1L_1, \hat{\omega}_1 L_1 \rangle + 2 (\langle \hat{\omega}_1 L_1 , \hat{\omega}_1 (R_2 X_{B_2}) \rangle \\ \nonumber
&+ \langle \hat{\omega}_1 L_1 , R_2\hat{\omega}_2 L_2 \rangle)\\ \nonumber
&+ \langle (\hat{\omega}_1R_2+ R_2 \hat{\omega}_2)X_{B_2}, (\hat{\omega}_1R_2+ R_2 \hat{\omega}_2)X_{B_2} \rangle
\end{align}
We expand the second term in \eqref{normxb2},
\begin{align*}
&T_2 \coloneq 2 \langle \hat{\omega}_1 L_1 , \hat{\omega}_1 (R_2 X_{B_2}) \rangle + 2\langle \hat{\omega}_1 L_1 , R_2\hat{\omega}_2 L_2 \rangle\\
&= 2\langle \Omega_1, (R_2 X_{B_2}) \times (\hat{\omega}_1 L_1 )\rangle \\
&+ 2\langle \Omega_1, L_1 \times R_2(\hat{\omega}_2X_{B_2})\rangle\\
&= 2\langle \Omega_1, (L_1^T R_2 X_{B_2}) \Omega_1 - ((R_2 X_{B_2})^T\Omega_1)L_1 \rangle\\
&+ 2 \langle \Omega_1, (L_1^T R_2 X_{B_2})(R_2 \Omega_2)- (L_1^T R_2\Omega_2)R_2 X_{B_2} \rangle \\
&= 2\langle \Omega_1, (L_1^T R_2 X_{B_2})(\Omega_1+\Omega_2) \rangle
\end{align*}
as $(R_2 X_{B_2})^T\Omega_1 =0$, $R_2\Omega_2 = \Omega_2$ and, $L_1^T \Omega_2=0$.\\
The third term in \eqref{normxb2} is
\begin{align*}
&T_3 \coloneq \langle (\hat{\omega}_1R_2+ R_2 \hat{\omega}_2)X_{B_2}, (\hat{\omega}_1R_2+ R_2 \hat{\omega}_2)X_{B_2} \rangle\\
&= \langle \hat{\omega}_1R_2 X_{B_2}, \hat{\omega}_1R_2 X_{B_2} \rangle
+2\langle R_2 \hat{\omega}_2 X_{B_2}, \hat{\omega}_1R_2 X_{B_2} \rangle\\
&+ \langle R_2 \hat{\omega}_2 X_{B_2},R_2 \hat{\omega}_2 X_{B_2}\rangle\\
&= \langle \hat{\omega}_1 (R_2 X_{B_2} \times \hat{\omega}_1 R_2 X_{B_2}) \rangle + \langle \hat{\omega}_1 ,R_2 (X_{B_2} \times \hat{\omega_2}X_{B_2})\\
&+ \langle \hat{\omega}_2 X_{B_2}, \hat{\omega}_2 X_{B_2}\rangle
\end{align*}
Observe that,
\begin{equation*}
\int_{B_2} T_2 \rho_2 \mathrm{d}V_2 = m_2(L_1^T R_2 L_2)(\omega_1^2+ \omega_1 \omega_2)
\end{equation*}
as $\int_{B_2} X_{B_2} \rho_2 \mathrm{d}V_2 = L_2/2$ and,
\begin{equation*}
\int_{B_2} T_3 \rho_2 \mathrm{d}V_2= \mathbb{I}_2 \omega_1^2+ 2 \mathbb{I}_2 \omega_1 \omega_2 + \mathbb{I}_2 \omega_2^2
\end{equation*}
as $\mathbb{I}_2 (v) = \int_{B_2} X_{B_2} \times (\hat{v} X_{B_2})\rho_2 \mathrm{d}V_2 $ for $v \in \mathbb{R}^3$.
From \eqref{normxb2}, the second term in \eqref{ke} is
\begin{align}\label{t2ke}
  \int_{B_2} ||\dot{x}_{B_2}||^2 \rho_2 \mathrm{d}V_2 &=  \int_{B_2}\langle \hat{\omega}_1L_1, \hat{\omega}_1 L_1 \rangle \rho_2 \mathrm{d}V_2 \\ \nonumber
  &+ \int_{B_2} T_2 \rho_2 \mathrm{d}V_2  + \int_{B_2} T_3 \rho_2 \mathrm{d}V_2\\\nonumber
  &= m_2 l_1^2 \omega_1^2 + m_2(L_1^T R_2 L_2)(\omega_1^2+ \omega_1 \omega_2) \\ \nonumber
  &+ \mathbb{I}_2 \omega_1^2+ 2 \mathbb{I}_2 \omega_1 \omega_2 + \mathbb{I}_2 \omega_2^2
\end{align}
Therefore, from \eqref{ke}, \eqref{t1ke} and \eqref{t2ke}, the kinetic energy is
\begin{align} \label{ke2}
K(\omega_1, \omega_2, R_2) = &\omega_1^2(\mathbb{I}_1 + \mathbb{I}_2 + m_2 l_1^2+m_2L_1^T R_2 L_2 ) \\ \nonumber
&+ \omega_1 \omega_2 ( 2 \mathbb{I}_2+ m_2L_1^T R_2 L_2 )+ \mathbb{I}_2 \omega_2^2
\end{align}
\subsection{Equations of motion} \label{app2}
The Lagrangian is defined as $L = K-V$ where $K$ is given in \eqref{ke1} and $V$ is given in \eqref{pe}. As the Lagrangian is not invariant with respect to $R_2$, therefore the equations of motion have to be derived using method of variations. Let $R_1(t)$, $R_2(t)$, $t \in [a,b]$ be curves on $SO(2)$ with $\delta R_i(a) = R_i(b)=0$ for $i =1,2$. By Hamilton's principle, the variation of the action integral is zero. Therefore,
\begin{equation}
\delta \int_a^b L(R_1,R_2,\omega_1,\omega_2)=0
\end{equation}
which is
\begin{align}\label{varprin}
&\int_a^b \{ \langle \frac{\delta K}{\partial \omega_1}, \delta \omega_1 \rangle +  \langle \frac{\delta K}{\partial \omega_2}, \delta \omega_2 \rangle + \langle \frac{\delta K}{\partial R_2}, \delta R_2 \rangle  \\ \nonumber
&+ \langle \frac{\delta V}{\partial R_1}, \delta R_1 \rangle  + \langle \frac{\delta V}{\partial R_1}, \delta R_1 \rangle\}\mathrm{d}t =0
\end{align}
where the variations $\delta \omega_1$, $\delta \omega_2$ are induced by the variations in $R_1(t)$ and $R_2(t)$ respectively as follows
\begin{equation}\label{domega}
\delta \hat{\omega}_i = - R_i^T \delta R_i R_i^T \dot{R}_i + R_i^T \delta \dot{R}_i, \quad \text{for} \quad i=1,2.
\end{equation}
Let, $\hat{\Sigma}_i = R^T_i \delta R_i $, $i =1,2$. As $SO(2)$ is Abelian,
\begin{equation}\label{dsigma}
\dot\hat{\Sigma}_i  = \delta \hat{\omega}_i ,  \quad \text{for} \quad i=1,2.
\end{equation}
Therefore, $\dot{\Sigma}_i = \delta {\omega}_i  $, $i=1,2$.
\newline
Let $K = \omega_1^2 K_1 + \omega_1 \omega_2 K_2 + \omega_2^2 K_3$ and let $V = T_1 +T_2$ where $T_1 = (\frac{m_1}{2}+m_2)g l_1 e_1^T R_1 e_1 $ and $T_2 = \frac{m_2}{2}g l_2 e_1^T R_2 R_1 e_1$. \\
From \eqref{ke1}, $K_1 = \mathbb{I}_1 + \mathbb{I}_2 + m_2 l_1^2+m_2L_1^T R_2 L_2 $, $K_2 =2 \mathbb{I}_2+ m_2L_1^T R_2 L_2 $ and, $K_3 = \mathbb{I}_2$. Now we expand each term of \eqref{varprin},
\begin{equation}\label{t1}
\langle \frac{\delta K}{\partial \omega_1}, \delta \omega_1 \rangle = \langle 2 K_1 \omega_1 + K_2 \omega_2, \dot{\Sigma}_1 \rangle,
\end{equation}
\begin{equation}\label{t2}
\langle \frac{\delta K}{\partial \omega_2}, \delta \omega_2 \rangle= \langle 2 K_3 \omega_2 + K_2 \omega_1 , \dot{\Sigma}_2 \rangle
\end{equation}
\begin{align}\label{t3}
\langle \frac{\delta K}{\partial R_2}, \delta R_2 \rangle &= \langle m_2 L_1 L_2^T(\omega_1^2 + \omega_1 \omega_2), \delta R_2 \rangle \\ \nonumber
&= \langle m_2 R_2 L_1 L_2^T(\omega_1^2 + \omega_1 \omega_2), \delta \hat{\Sigma}_2 \rangle
\end{align}
\begin{equation}\label{t4}
\langle \frac{\delta V}{\partial R_1}, \delta R_1 \rangle= \langle R_1(\frac{\delta T_1}{\partial R_1} + \frac{\delta T_2}{\partial R_1}), \delta \hat{\Sigma}_1 \rangle
\end{equation}
\begin{equation}\label{t5}
\langle \frac{\delta V}{\partial R_2}, \delta R_2 \rangle= \langle R_2\frac{\delta T_2}{\partial R_2}, \delta \hat{\Sigma}_2 \rangle
\end{equation}
From \eqref{t1}, integrating by parts we get,
\begin{equation}\label{t1cont}
\int_a^b \langle \frac{\delta K}{\partial \omega_1}, \delta \omega_1 \rangle = \langle - \frac{\mathrm{d}}{\mathrm{d}t}(2K_1 \omega_1 +K_2 \omega_2), \Sigma_1 \rangle
\end{equation}
as $\Sigma_1(a) = \Sigma_1(b) = 0$. Similarly, from \eqref{t2}, integrating by parts we get,
\begin{equation}\label{t2cont}
\int_a^b\langle \frac{\delta K}{\partial \omega_2}, \delta \omega_2 \rangle=  \langle - \frac{\mathrm{d}}{\mathrm{d}t}(2K_3 \omega_2 +K_2 \omega_1), \Sigma_2 \rangle
\end{equation}
Therefore, from \eqref{t3}, \eqref{t1cont} and \eqref{t2cont}, \eqref{varprin} is,
\begin{align}\label{varprin1}
\int_a^b  \langle &- {\bigg\{\frac{\mathrm{d}}{\mathrm{d}t}(2K_1 \omega_1 +K_2 \omega_2)\bigg\}}^{\hat{}} -R_1(\frac{\delta T_1}{\partial R_1} + \frac{\delta T_2}{\partial R_1}), \hat{\Sigma}_1 \rangle \\ \nonumber
&+ \langle m_2 R_2 L_1 L_2^T(\omega_1^2 + \omega_1 \omega_2) -R_2\frac{\delta T_2}{\partial R_2}\\ \nonumber
&-  {\bigg\{\frac{\mathrm{d}}{\mathrm{d}t}(2K_3 \omega_2 +K_2 \omega_1)\bigg\}}^{\hat{}}, \hat{\Sigma}_2 \rangle= 0.
\end{align}
Since \eqref{varprin1} holds for all $\Sigma_i$ in definition \eqref{dsigma},
\begin{subequations}
\begin{equation} \label{dyneq1}
{\bigg\{\frac{\mathrm{d}}{\mathrm{d}t}(2K_1 \omega_1 +K_2 \omega_2)\bigg\}}^{\hat{}} + R_1(\frac{\delta T_1}{\partial R_1} + \frac{\delta T_2}{\partial R_1})= u_1 \quad \text{and},
\end{equation}
\begin{align} \label{dyneq2}
 &{\bigg\{\frac{\mathrm{d}}{\mathrm{d}t}(2K_3 \omega_2 +K_2 \omega_1)\bigg\}}^{\hat{}} - m_2( R_2 L_1 L_2^T)(\omega_1^2 + \omega_1 \omega_2) \\ \nonumber
 &+R_2\frac{\delta T_2}{\partial R_2}=0.
\end{align}
\end{subequations}
where $u_1$ is the actuation in $L_1$.
\newline
Note that $L_1^T R_2 L_2  = tr(L_1 L_2^T R_2)$. Therefore,
\begin{equation*}
\frac{\mathrm{d}}{\mathrm{d}t}K_1 = m_2 \langle L_1 L_2^T, R_2 \hat{\omega}_2 \rangle \quad \text{and}, \quad
\frac{\mathrm{d}}{\mathrm{d}t}K_2= m_2 \langle L_1 L_2^T, R_2 \hat{\omega}_2 \rangle .
\end{equation*}
Also,
\begin{align*}
&\frac{\delta T_1}{\partial R_1}= (\frac{m_1}{2}+m_2)g l_1 e_1 e_1^T \quad, \quad \frac{\delta T_2}{\partial R_1} = \frac{m_2}{2} g l_2 R_2^T e_1 e_1^T\\
&\text{and} \quad \frac{\delta T_2}{\partial R_2}= \frac{m_2}{2} g l_2 e_1 e_1^T R_1^T.
\end{align*}
Let, \begin{align*}
\alpha \coloneq m_2 \langle L_1 L_2^T, R_2 \hat{\omega}_2 \rangle, \quad \beta \coloneq m_2{\{skew(R_2 L_1 L_2^T)\}}\breve{},
\end{align*}
\begin{align*}
\Gamma_1 &\coloneq (0.5{m_1}+m_2)g l_1 {\{skew(R_1 e_1 e_1^T)\}}\breve{}\\
&+ 0.5{m_2} g l_2 {\{skew(R_1 R_2^T e_1 e_1^T)\}}\breve{} ,
\end{align*}
\begin{align*}
\Gamma_2 \coloneq 0.5{m_2}g l_2 {\{skew(R_2 e_1 e_1^T R_1^T)\}}\breve{}.
\end{align*}
Therefore, \eqref{dyneq1} and \eqref{dyneq2} are,
\begin{subequations}
\begin{equation} \label{dyneq12}
2K_1 \dot{\omega}_1 +  \alpha (2\omega_1+ \omega_2) + K_2 \dot{\omega}_2 + \Gamma_1=u_1 \quad \text{and},
\end{equation}
\begin{equation}\label{dyneq22}
2K_3 {\dot{\omega}}_2+ \alpha {\omega}_1 + K_2 {\dot{\omega}}_1 - \beta(\omega_1^2 + \omega_1 \omega_2) + \Gamma_2=0
\end{equation}
\end{subequations}
respectively. Therefore, from \eqref{dyneq12} and \eqref{dyneq22},
\begin{subequations}
\begin{equation}\label{dyneq13app}
\dot{\omega}_1= \frac{1}{2K_1}(u_1 -\Gamma_1 - K_2 \dot{\omega}_2 - \alpha(2\omega_1 + \omega_2))
\end{equation}
\begin{align}\label{dyneq23app}
(2K_3 - \frac{K_2^2}{2K_1})\dot{\omega}_2 &= -\frac{K_2}{2K_1}(u_1-\Gamma_1 - \alpha(2\omega_1 + \omega_2))- \alpha \omega_1 \\ \nonumber
&+ \beta(\omega_1^2 + \omega_1 \omega_2) - \Gamma_2.
\end{align}
\end{subequations}
The reconstruction equations are given by
\begin{equation}\label{reconeq}
\dot{R}_i = R_i {\hat{\omega}}_i, \qquad i =1,2.
\end{equation}
where ${\hat{\omega}}_i \coloneq \begin{pmatrix}
                                   0 & -\omega_i \\
                                   \omega_i & 0
                                 \end{pmatrix}$, for $i =1,2$.

\bibliographystyle{plain}
\bibliography{aps1}
\end{document}